\documentclass[11pt]{article}
\usepackage[latin9]{inputenc}
\usepackage{color}
\usepackage{amsthm}
\usepackage{amsmath}
\usepackage{amssymb}
\usepackage{graphicx}
\usepackage{esint}
\usepackage[square,numbers,sort,compress]{natbib}
\usepackage[colorlinks=true,citecolor=blue]{hyperref}
\PassOptionsToPackage{normalem}{ulem}
\usepackage{calc}
\usepackage{amsthm}
\usepackage{subfigure}
\usepackage{enumerate}
\usepackage{epstopdf}
\usepackage{authblk}
\usepackage[title]{appendix}
\usepackage{color}
\usepackage{ulem}

\theoremstyle{definition}
\newcommand{\vc}{\mathbf}

\newcommand{\bxi}{\pmb\xi}
\newcommand{\id}{\mbox{d}}
\newcommand{\Ff}{\vc F_{t_0}^t}

\newcommand{\R}{\vc R_{t_0}^t}

\newcommand{\Cf}{\vc C_{t_0}^{t}}

\newcommand{\bnabla}{\pmb\nabla}
\newcommand{\tr}{\mathrm{tr}\,}

\newtheorem{theorem}{}

\newtheorem{defn}{Definition}

\newtheorem{prop}[theorem]{Proposition}


\makeatother

\begin{document}
\title{Polar rotation angle identifies elliptic islands in unsteady dynamical systems}
\author[1]{Mohammad Farazmand\thanks{Corresponding author's email address: 
mohammad.farazmand@physics.gatech.edu}}
\author[2]{George Haller}
\affil[1]{\small Center for Nonlinear Science, School of Physics, Georgia Institute of 
Technology, 837 State Street, Atlanta GA 30332, USA }
\affil[2]{\small Department of Mechanical and Process Engineering,
	ETH Z\" urich, 8092 Z\"urich Switzerland}
\date{}
\maketitle
\begin{abstract}
We propose rotation inferred from the polar decomposition of the flow
gradient as a diagnostic for elliptic (or vortex-type) invariant regions
in non-autonomous dynamical systems. We consider here two- and three-dimensional
systems, in which polar rotation can be characterized by a single
angle. For this polar rotation angle (PRA), we derive explicit formulas
using the singular values and vectors of the flow gradient. We find
that closed level sets of the PRA reveal elliptic islands in great
detail, and singular level sets of the PRA uncover centers 
of such islands. Both features turn out to be objective (frame-invariant)
for two-dimensional systems. We illustrate the diagnostic power of
PRA for elliptic structures on several examples.
\end{abstract}

\section{Introduction}

\label{sec:intro} Complex dynamical systems exhibit a mixture of
chaotic and coherent behavior in their phase space. The latter manifests
itself in coherent islands of regular behavior surrounded by a chaotic
background flow. The best known classic examples of such islands are
formed by Kolmogorov--Arnold-Moser (KAM) tori, composed of quasi-periodic
trajectories in Hamiltonian systems \citep[see, e.g.,][]{guckenheimer,topolHydro_arnold}.
Outside elliptic regions filled by such tori, chaotic trajectories
dominate the dynamics.

Even more intriguing is the existence of similar elliptic islands in turbulent
fluid flow, as broadly confirmed by experiments and numerical simulations
\citep[see, e.g.,][]{frischbook,provenzale2008coherent}. Just as
KAM islands, coherent vortices capture trajectories and keep them out
of chaotic mixing zones. Unlike KAM tori, however, coherent vortices
are composed of trajectories that are generally not recurrent in any
frame. During their finite time of existence, these coherent vortices traverse
without filamentation but also without displaying any particular periodic
or quasiperiodic pattern. Still, we generally refer to such regions here
as elliptic, as they mimic the dynamic role of elliptic islands occupied
by classic KAM tori.

Eulerian approaches to describing elliptic islands seek domains where
rotation dominates the instantaneous velocity field. At the simplest
level, this involves locating regions of closed streamlines, high
enough vorticity or low enough pressure (cf. \citet{vortexIdent} and \citet{objVortex}
for reviews). Such domains reveal instantaneous velocity field features
at a low cost, but are unable to frame long-term material coherence
exhibited by trajectories. In addition, the results from these instantaneous
approaches depend on the choice of scalar thresholds and on the frame
of reference.

More sophisticated Eulerian principles for elliptic regions seek sets
of points where rotation dominates strain (see, e.g., 
\citet{okubo,weiss_okubo,hunt1988eddies,hua1998,vortexIdent,tabor1994},
and also \citet{vortexIdent} and \citet{objVortex} for reviews). These principles
infer both rotation and strain from the instantaneous velocity gradient,
thereby rendering the results Galilean invariant. The elliptic regions
they provide, however, still change under rotations of the frame.
Since truly unsteady flows have no distinguished frame of reference
\citep{lugt1979}, frame-dependence in the detection of vortical structures
is an impediment. Indeed, the available measurement velocity data of geophysical flows is 
often given in a rotating frame to begin with, and no
optimal frame is known a priori for structure detection. More importantly,
no mathematical relationship is known (or likely to exist) between
instantaneous rotation-strain principles and material coherence over
extended time intervals.

In contrast, Lagrangian approaches to elliptic islands seek to identify
regions where trajectories stay close for longer periods. These approaches
can roughly be divided into three categories: geometric, set-based
and diagnostic methods. The geometric methods identify elliptic domain
boundaries as spacial closed material lines showing no filamentation
\citep{bhEddy,blazevski_3d,LCS_review} or curvature change \citep{nonhyp_splitting}.
Set-based methods partition the phase space into almost invariant
subsets (see \citet{budivsic2012,froyland} and references therein).
While the boundaries of such sets may undergo filamentation, the overall
subsets remain largely coherent. Finally, diagnostic approaches propose
Lagrangian scalar fields whose features are expected to distinguish
mixing regions from coherent ones 
\citep{mixingFTLE,fsle,rypina2011,Mancho2013,mezic_meso,mundel2014}.
These Lagrangian methods do not return identical results and are not
backed by specific mathematical results on the features they highlight.
In fact, the material invariance of the extracted vortical boundaries is only
guaranteed in the case of the geodesic approach of \citet{bhEddy} and \citet{LCS_review}.

The Lagrangian methods listed above focus on stretching or lack thereof.
In contrast, very few Lagrangian diagnostics target rotation, even
though sustained and coherent rotation is perhaps the most striking
feature of trajectories forming elliptic islands. One of the few exceptions
targeting material rotation is the finite-time rotation number (FTRN),
developed to detect hyperbolic (i.e., repelling or attracting as opposed
to vortical) structures through its ridges \citep{szeceh2013}.
The FTRN assumes that the dynamical system is defined via an iterated
map with an annular phase space. For dynamical systems with general
time dependence and non-annular phase space, however, this approach
is not applicable. This also means that the approach is frame-dependent,
given that translations and rotations will generally destroy the time-periodicity
of a dynamical system. 

Another Lagrangian diagnostic involving a consideration of rotation
is the mesocronic analysis of~\citet{mezic_meso}.
This approach offers a formal extension of the Okubo--Weiss principle
from the velocity gradient to the flow gradient, classifying
an initial condition as elliptic if the flow gradient has complex
eigenvalues at that point. The mesoelliptic diagnostic is efficient
to compute and has been shown to mark vortical regions in several
cases. The direct extension from the Okubo-Weiss principle, however,
also renders the mesoelliptic diagnostic frame-dependent. In addition,
the complex eigenvalues of a finite-time flow map have no known mathematical
relationship with elliptic islands in flows with general time dependence.
Accordingly, some annular subsets of classic elliptic domains fail
the test of meso-ellipticity even in steady flows (cf.~\citep{mezic_meso}, Fig.~1).

Here we propose a mathematically precise assessment of material rotation, the
polar rotation angle (PRA), as a new diagnostic for elliptic islands
in two- and three-dimensional flows. The PRA is the angle of the rigid-body
rotation component obtained from the classic polar decomposition of
the flow gradient into a rotational and a stretching factor.
We show how the PRA can readily be computed from invariants of the
flow gradient and the Cauchy--Green strain tensor. Level sets
of the PRA turn out to be objective (frame-invariant) in planar
flows. We find that these level sets reveal the internal structure
of elliptic islands in great detail at a relatively low computational
cost. We also find that local extrema of the PRA mark elliptic island
centers suitable for automated vortex tracking in Lagrangian fluid dynamics.

\section{Preliminaries}

\label{sec:prelim}

\subsection{Set-up}

Consider the dynamical system 
\begin{eqnarray}
\dot{\vc x}=\vc u(\vc x,t),\ \ \ \vc x\in \mathcal D\subset\mathbb{R}^{3},\ \ \ t\in 
I\subset\mathbb{R},\label{eq:dynsys}
\end{eqnarray}
with the corresponding flow map 
\begin{align}
\Ff:\  & \mathcal D\rightarrow \mathcal D\nonumber \\
& \vc x_{0}\mapsto\vc x(t;t_{0},\vc x_{0}),\label{eq:flowMap}
\end{align}
the diffeomorphism that takes the initial condition $\vc x_{0}$ to
its time-$t$ position $\vc x(t;t_{0},\vc x_{0})$ under system \eqref{eq:dynsys}.
Here, $\mathcal D$ denotes the phase space and $I$ is a finite time interval
of interest.

The deformation gradient $\bnabla\Ff$ governs the infinitesimal deformations
of the phase space $\mathcal D$. In particular, an initial perturbation $\bxi$
at point $\vc x_{0}$ and time $t_{0}$ is mapped, under the system
\eqref{eq:dynsys}, to $\bnabla\Ff(\vc x_{0})\bxi$ at time $t$.
We also define the \emph{Cauchy--Green strain tensor}, 
\begin{equation}
\Cf:=\left[\bnabla\Ff\right]^{\top}\bnabla\Ff:\vc x_{0}\mapsto\Cf(\vc x_{0}),
\label{eq:CG}
\end{equation}
where the symbol $\top$ denotes matrix transposition. The tensor
$\Cf(\vc x_{0})$ is symmetric and positive definite. Therefore, it
has an orthonormal set of eigenvectors $\{\bxi_{1}(\vc x_{0}),\bxi_2(\vc 
x_{0}),\bxi_{3}(\vc x_{0})\}$.
The corresponding eigenvalues $0<\lambda_{1}(\vc x_{0})\leq\lambda_{2}(\vc 
x_{0})\leq\lambda_{3}(\vc x_{0})$
therefore satisfy 
\begin{equation}
\Cf(\vc x_{0})\bxi_{i}(\vc x_{0})=\lambda_{i}(\vc x_{0})\bxi_{i}(\vc x_{0}),\ \ \ 
i\in\{1,2,3\},
\end{equation}
\begin{equation}
\langle\bxi_{j}(\vc x_{0}),\bxi_{k}(\vc x_{0})\rangle=0,\ \ \ j,k\in\{1,2,3\},\ \ \ j\neq 
k,
\end{equation}
with $\langle\cdot,\cdot\rangle$ denoting the Euclidean inner product.
For notational simplicity, we omit the dependence of the eigenvalues
and eigenvectors on $t_{0}$ and $t$. Also, we consider two-dimensional
flows as a special case satisfying $\partial_{x_{3}}\vc u_{i}(\vc x,t)\equiv0,$
$i=1,2,3$.

\subsection{Polar decomposition}

\label{sec:polarDec} Any square matrix admits a factorization into
the product of a unitary matrix with a symmetric positive-semidefinite
matrix \citep{conwayOperatorTheory}. When the square matrix is nonsingular,
such as $\bnabla\Ff$, then the symmetric factor in the decomposition
is positive definite. 

Specifically, the deformation gradient $\bnabla\Ff$ admits a unique
decomposition of the form 
\begin{equation}
\bnabla\Ff=\R\mathbf{U}_{t_{0}}^{t},\label{eq:polarDec}
\end{equation}
where the $3\times3$ matrices $\R$ and $\mathbf{U}_{t_{0}}^{t}$
have the following properties \citep{conwayOperatorTheory,gurtin1982,Truesdell09}: 
\begin{enumerate}
	\item The rotation tensor $\R$ is proper orthogonal, i.e., 
	\[
	\big(\R\big)^{\top}\R=\R\big(\R\big)^{\top}=\mathbf{I}\mbox{\mbox{,}}\qquad\det\R=1.
	\]
	
	\item The right stretch tensor $\mathbf{U}_{t_{0}}^{t}$ is symmetric and
	positive-definite, satisfying 
	\begin{equation}
	\left[\mathbf{U}_{t_{0}}^{t}\right]^{2}=\Cf.\label{eq:square}
	\end{equation}
	
	\item The eigenvalues of $\mathbf{U}_{t_{0}}^{t}$ are $\sqrt{\lambda_{k}}$
	with corresponding eigenvectors $\bxi_{k}$: 
	\begin{align}
	\mathbf{U}_{t_{0}}^{t}(\vc x_{0})\bxi_{k}(\vc x_{0}) & =\sqrt{\lambda_{k}(\vc 
	x_{0})}\bxi_{k}(\vc x_{0}),\ \ \ k=1,2,3,\label{eq:rescaling}\\
	\nonumber 
	\end{align}
	
	\item The time derivative of the rotation tensor satisfies 
	\begin{equation}
	\dot{\mathbf{R}}_{t_{0}}^{t}=\left(\mathbf{W}\left(\vc x(t),t\right)-
	\frac{1}{2}\mathbf{R}_{t_{0}}^{t}
	\left[\dot{\mathbf{U}}_{t_{0}}^{t}
	\left(\mathbf{U}_{t_{0}}^{t}\right)^{-1}-
	\left(\mathbf{U}_{t_{0}}^{t}\right)^{-1}
	\dot{\mathbf{U}}_{t_{0}}^{t}\right]
	\left(\mathbf{R}_{t_{0}}^{t}\right)^\top\right)\mathbf{R}_{t_{0}}^{t},
	\label{eq:derivR}
	\end{equation}
	where
	$\mathbf{W}=\frac{1}{2}\left[\nabla\mathbf{u}-
	\left(\nabla\mathbf{u}\right)^{\top}\right]$
	is the vorticity (or spin) tensor and $\vc x(t)$ is a shorthand notation for the 
	trajectory $\vc x(t;t_{0},\vc x_{0})$. A derivation of \eqref{eq:derivR} can be 
	found, 
	e.g., in~\cite[][Section 23]{truesdell}.
\end{enumerate}

The geometric interpretation of the polar decomposition is the following
\citep{gurtin1982,introContinuum}. At any point $\vc x_{0}$ of the
phase space, the orthogonal basis $\{\bxi_{k}\}_{1\leq k\leq3}$ is mapped
into $\{\bnabla\Ff(\vc x_{0})\bxi_{k}\}_{1\leq k\leq3}$ under the linearized
flow map $\bnabla\Ff$. Any stretching and compression in the deformation
is encoded into the stretch tensor $\mathbf{U}_{t_{0}}^{t}$, while
the overall rigid-body rotation of material elements is encoded into
the rotation tensor $\R$. Figure~\ref{fig:polarDec} illustrates the
action of these tensors on area elements in two dimensions. 
\begin{figure}[t!]
	\centering \includegraphics[width=\textwidth]{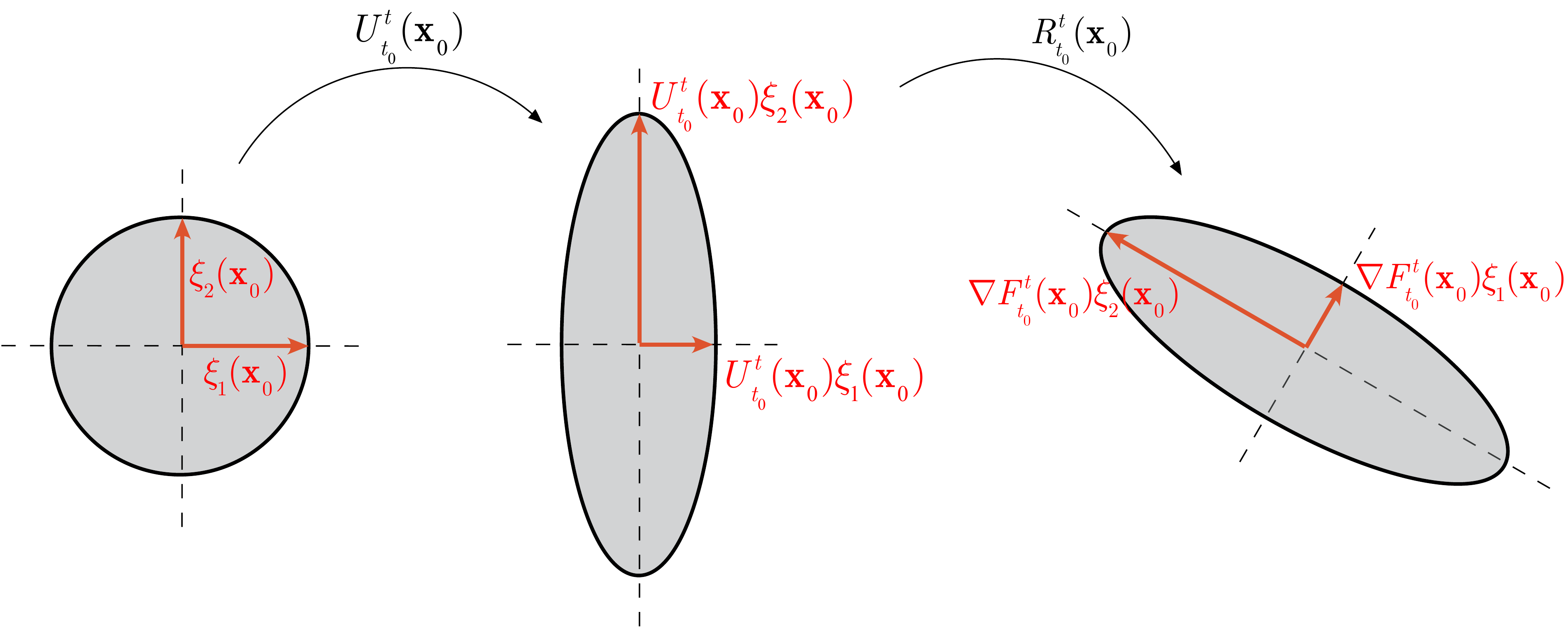}
	\protect\caption{The action of the deformation gradient $\bnabla\Ff$ is uniquely 
	decomposable
		into positive definite stretch by $\mathbf{U}_{t_{0}}^{t}$ followed
		by rotation by $\R$. This results in the polar decomposition 
		$\bnabla\Ff=\R\mathbf{U}_{t_{0}}^{t}$.}
	\label{fig:polarDec} 
\end{figure}

When mapped forward under the deformation gradient $\nabla\Ff$, a
general unit vector $\mathbf{a}$ experiences two stages of rotation.
First, $\mathbf{a}$ is rotated (and simultaneously stretched) by
the stretch tensor $\mathbf{U}_{t_{0}}^{t}$ into the vector 
$\mathbf{U}_{t_{0}}^{t}\mathbf{a}$.
This first stage of rotation is entirely due to shear, with the magnitude
and axis of rotation depending on $\mathbf{a}$. The second stage
of rotation experienced by $\mathbf{a}$ is due to the rotation tensor
$\mathbf{R}_{t_{0}}^{t}$, which rotates $\mathbf{a}$ into its final
position $\mathbf{R}_{t_{0}}^{t}\mathbf{U}_{t_{0}}^{t}\mathbf{a}$
at time $t$. This second rotation acts in the same way on all
$\mathbf{U}_{t_{0}}^{t}\mathbf{a}$ vectors by the proper orthogonal
nature of $\mathbf{R}_{t_{0}}^{t}$. 

Formed by the eigenvectors of $\mathbf{C}_{t_{0}}^{t}$, the principle
rectangle illustrated in Fig.~\ref{fig:lagRot} has a special feature:
it is the unique rectangle on which the first stage of rotation under
$\mathbf{U}_{t_{0}}^{t}$ is inactive. This is because the edges of
the principal rectangle align with the eigenvectors of $\mathbf{U}_{t_{0}}^{t}$
(cf. eq.~\eqref{eq:rescaling} and Fig.~\ref{fig:polarDec}), and hence remain 
unrotated by 
$\mathbf{U}_{t_{0}}^{t}$.
The total rotation experienced by the edges of the principal rectangle
is, therefore, just the rigid-body rotation exerted by the rotation
tensor $\mathbf{R}_{t_{0}}^{t}$. 

\begin{figure}[t!]
	\includegraphics[width=1\textwidth]{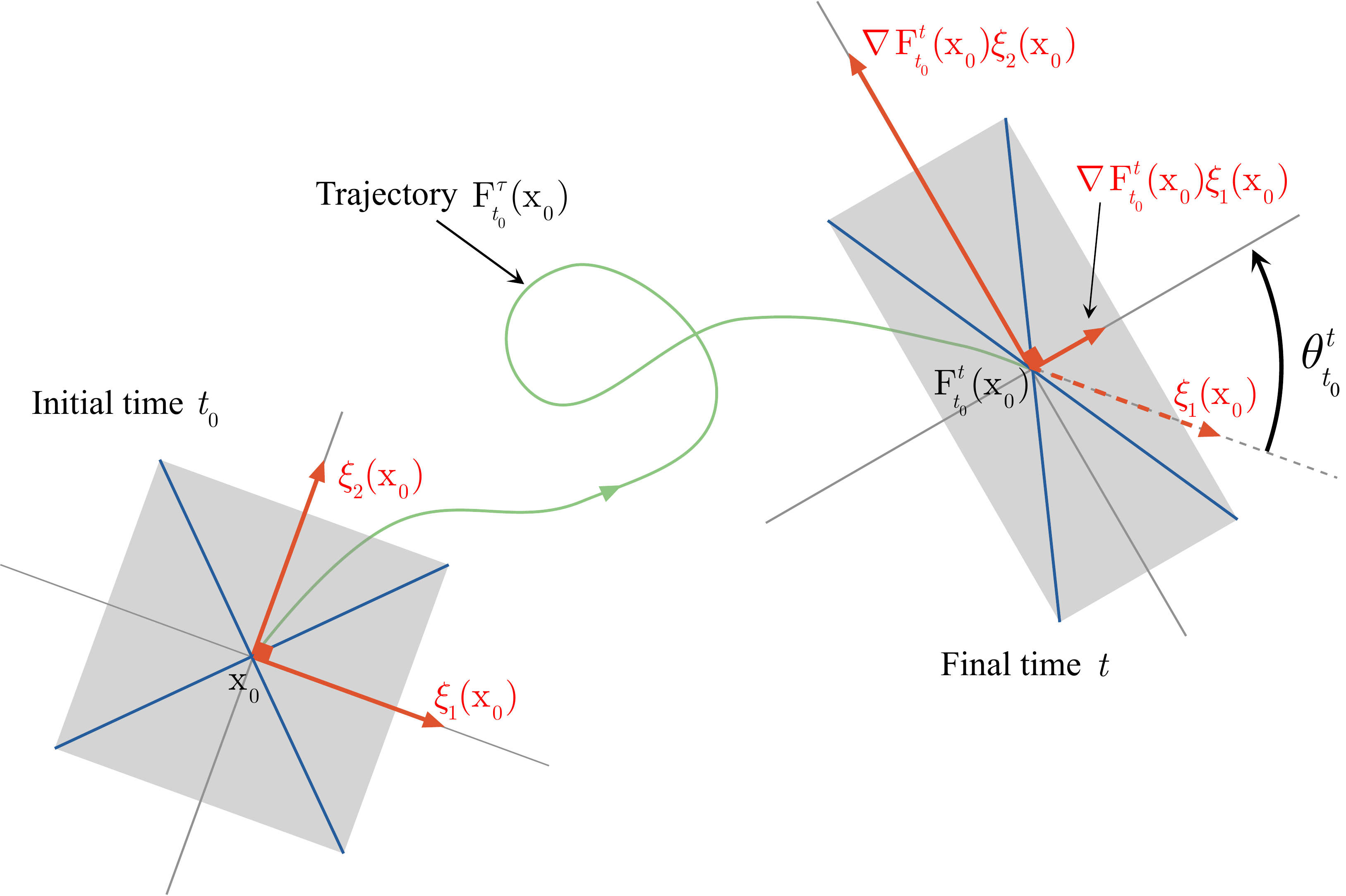}
	\caption{Finite-time deformation of an area element of a two-dimensional phase
		space under the flow map $\vc F_{t_{0}}^{t}$. The orthonormal basis 
		$\{\bxi_1,\bxi_2\}$ 
		is mapped to the orthogonal basis $\{\bnabla\Ff\bxi_1,\bnabla\Ff\bxi_2\}$. Other 
		initially orthogonal material elements, such as the diagonals shown in blue, are 
		mapped 
		to non-orthogonal material elements.}
	\label{fig:lagRot} 
\end{figure}

Formula \eqref{eq:derivR} shows the difference between instantaneous
Eulerian rotation measured by the vorticity tensor $\mathbf{W}$ and
finite-time material rotation measured by $\R$. In particular, at an initial time 
$t_{0}$, we
have 
$$\dot{\mathbf{R}}_{t_{0}}^{t}\big|_{t={t_{0}}}=\mathbf{W}(\mathbf{x}_{0},t_{0}),$$
but $\dot{\mathbf{R}}_{t_{0}}^{t}$
differs from the vorticity tensor $\vc W(\vc x(t),t)$ for times $t\neq t_{0}$.

\section{Polar rotation angle (PRA)}
The classic procedure for computing the polar decomposition in continuum
mechanics starts with determining $\mathbf{U}_{t_{0}}^{t}$as the
principal square root of the Cauchy--Green strain tensor (cf. formula
\eqref{eq:square}). This is the simplest to do by diagonalizing $\mathbf{C}_{t_{0}}^{t}$,
taking the positive square root of its diagonal elements, and transforming
back the resulting matrix from the strain eigenbasis to the original
basis. Next, one obtains the rotation tensor directly from \eqref{eq:polarDec}
as $\mathbf{R}_{t_{0}}^{t}=\nabla\Ff\left[\mathbf{U}_{t_{0}}^{t}\right]^{-1}$. More
efficient numerical procedures are also available (see \citep{Bouby05}
and the references cited therein)

These computational approaches, however, offer little insight into
the geometry of the rotation generated by $\mathbf{R}_{t_{0}}^{t}$.
Taking a more geometric approach, one may recall that any three-dimensional
rotation $\mathbf{R}_{t_{0}}^{t}$ has a Rodrigues representation
\citep{NonlinContinuumMech} of the form 
\begin{equation}
\R=\mathbf{I}+\sin\theta_{t_{0}}^{t}\mathbf{P}_{t_{0}}^{t}+
(1-\cos\theta_{t_{0}}^{t})\left[\mathbf{P}_{t_{0}}^{t}\right]^{2},
\label{rotMat_theta}
\end{equation} 
where \textbf{$\mathbf{I}$} is the $3\times3$ identity matrix and
$\mathbf{P}_{t_0}^t$ is a $3\times3$ skew-symmetric matrix such that
\[
\mathbf{P}_{t_{0}}^{t}\mathbf{a}=\mathbf{r}_{t_{0}}^{t}\times\mathbf{a},\quad\forall
\mathbf a\in\mathbb R^3.
\]
The unit vector $\mathbf r_{t_0}^t$ is the eigenvector of $\R$ corresponding to its
unit eigenvalue, i.e.,
\begin{equation}
\mathbf{R}_{t_{0}}^{t}(\mathbf{x}_{0})\mathbf{r}_{t_{0}}^{t}(\mathbf{x}_{0})=
\mathbf{r}_{t_{0}}^{t}(\mathbf{x}_{0}).
\label{eq:axisRot}
\end{equation}
For planar flows, the eigenvector $\mathbf r_{t_0}^t$ is the unit normal 
to the plane of motion, and hence is independent of $\vc x_0$. In three-dimensions, $\vc 
r^t_{t_0}$ depends on the location $\vc x_0$ in a way discussed in the next section (cf. 
Proposition \ref{prop:pra})

Once an orientation for the unit vector 
$\mathbf{r}_{t_{0}}^{t}(\mathbf{x}_{0})$ is selected,
the angle $\theta_{t_{0}}^{t}(\mathbf{x}_{0})\in[0,2\pi)$ is uniquely
determined. This angle measures the amount of local
solid--body rotation experienced by material elements along the trajectory
$\vc x(t;t_{0},\vc x_{0})$.
\begin{defn}
	We refer to the scalar function
	\[
	\theta_{t_{0}}^{t}(\vc x_{0})\in[0,2\pi)
	\]
	determined by \eqref{rotMat_theta} as the \emph{polar rotation angle}
	(PRA) at the initial condition $\vc x_{0}$ with respect to the time
	interval $[t_{0},t]$.
	\label{def:PRA}
\end{defn}

\section{Computing the PRA}

Taking the trace of both sides in \eqref{rotMat_theta}, then
taking the skew-symmetric part of both sides of \eqref{rotMat_theta}
yields the formulas 
\begin{subequations}
	\begin{align}
	\cos\theta_{t_{0}}^{t} & =\frac{1}{2}\left(\tr\R-1\right),
	\label{eq:LagRot_3d_cos} \\
	\sin\theta_{t_{0}}^{t} & =\frac{\left[\hat{\mathbf{R}}_{t_{0}}^{t}\right]_{ij}}
	{\left[\mathbf{P}_{t_{0}}^{t}\right]_{ij}}
	\quad (i\neq j),\qquad\hat{\mathbf{R}}_{t_{0}}^{t}:=
	\frac{1}{2}\left(\mathbf{R}_{t_{0}}^{t}-\left[\mathbf{R}_{t_{0}}^{t}\right]^{T}\right).
	\label{eq:LagRot_3d_sin}
	\end{align}
	\label{eq:LagRot_3d}
\end{subequations}
To evaluate the expression for $\cos\theta_{t_{0}}^{t}$ in \eqref{eq:LagRot_3d_cos}, 
Guan-Suo \citep{guan-suo98}
expressed $\tr\R$ as a somewhat cryptic function of the scalar invariants
of the matrices $\nabla\Ff$, 
$\frac{1}{2}\left(\nabla\Ff+\left[\nabla\Ff\right]{}^{\top}\right)$
and $\mathbf{U}_{t_{0}}^{t}$. Here we derive a simply computable
and intuitive alternative that only involves quantities arising in
typical Lagrangian coherent structure calculations~\cite{LCS_review}: the deformation
gradient, and the eigenvalues and eigenvectors of the the Cauchy--Green
strain tensor.
\begin{prop}\leavevmode
	\begin{enumerate}[(1)]
		\item In three-dimensional flows, the PRA satisfies the relations
		\begin{subequations}
			\label{eq:PRA3D}
			\begin{align}
			\cos\theta_{t_{0}}^{t} & =\frac{1}{2}\left(\sum_{i=1}^{3}\frac{\left\langle 
				\bxi_{i},\nabla\Ff\bxi_{i}\right\rangle }{\sqrt{\lambda_{i}}}-1\right),
			\label{eq:cosPRA_3d} \\
			\sin\theta_{t_{0}}^{t} & =\frac{\left\langle 
			\bxi_{i},\nabla\Ff\bxi_{j}\right\rangle 
				-\left\langle \bxi_{j},\nabla\Ff\bxi_{i}\right\rangle 
				}{2\epsilon_{ijk}e_{k}},\qquad 
			i\neq j\in\{1,2,3\},
			\label{eq:sinPRA_3d}
			\end{align}
		\end{subequations}
		where $\mathbf{e}=(e_1,e_2,e_3)^\top$ is the normalized eigenvector corresponding
		to the unit eigenvalue of the matrix
		$$\left[\mathbf{K}_{t_{0}}^{t}\right]_{jk}=\frac{\left\langle 
			\bxi_{j},\nabla\Ff\bxi_{k}\right\rangle }{\sqrt{\lambda_{k}}},
		\quad j,k\in\{1,2,3\},$$
		and $\epsilon_{ijk}$ is the Levi-Civita symbol. Furthermore, we have $\vc 
		e = \vc r_{t_0}^t$ where $\vc r_{t_0}^t$ is the axis of rotation defined by 
		\eqref{eq:axisRot}.
		
		\item In two-dimensional flows, we have 
		\begin{subequations}
			\begin{align}
			\cos\theta_{t_{0}}^{t} & =\frac{\left\langle 
			\bxi_{i},\nabla\Ff\bxi_{i}\right\rangle 
			}{\sqrt{\lambda_{i}}},\qquad i=1\;\mbox{or}\;\,2, 
			\label{eq:cosPRA}\\
			\sin\theta_{t_{0}}^{t} & =\left(-1\right)^{j}\frac{\left\langle 
				\bxi_{i},\nabla\Ff\bxi_{j}\right\rangle 
			}{\sqrt{\lambda_{j}}},\qquad 
			\left(i,j\right)=\ensuremath{\left(1,2\right)}\quad 
			\mbox{or}\quad\ensuremath{\left(2,1\right)},
			\label{eq:sinPRA}
			\end{align}
			\label{eq:PRA2D}
		\end{subequations}
		where $\lambda_{1}\leq\lambda_{2}$ are the eigenvalues of the two-dimensional
		Cauchy--Green strain tensor with corresponding eigenvectors $\bxi_{1}$
		and $\bxi_{2}$.
	\end{enumerate}
	\label{prop:pra}
\end{prop}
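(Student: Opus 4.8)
The plan is to extract both the cosine and sine of the PRA directly from the Rodrigues representation \eqref{rotMat_theta} by pairing it with the eigenbasis $\{\bxi_1,\bxi_2,\bxi_3\}$ of $\mathbf C_{t_0}^t$, exploiting the fact that $\mathbf U_{t_0}^t$ acts diagonally on this basis via \eqref{eq:rescaling}. First I would start from the polar decomposition $\bnabla\Ff=\R\,\mathbf U_{t_0}^t$ and write $\R=\bnabla\Ff\,[\mathbf U_{t_0}^t]^{-1}$. Since $[\mathbf U_{t_0}^t]^{-1}\bxi_k=\lambda_k^{-1/2}\bxi_k$, we get the clean identity
\begin{equation}
\langle\bxi_i,\R\,\bxi_j\rangle=\frac{\langle\bxi_i,\bnabla\Ff\,\bxi_j\rangle}{\sqrt{\lambda_j}},
\label{eq:RinCGbasis}
\end{equation}
so the matrix $\mathbf K_{t_0}^t$ defined in the statement is precisely the matrix of $\R$ in the (orthonormal) eigenbasis of $\mathbf C_{t_0}^t$. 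Because change of orthonormal basis is an orthogonal conjugation, $\mathbf K_{t_0}^t$ is itself proper orthogonal with the same eigenvalues as $\R$; in particular its unit-eigenvalue eigenvector $\mathbf e$ is the coordinate vector of the rotation axis $\mathbf r_{t_0}^t$ expressed in the $\{\bxi_k\}$ frame, which gives the last assertion of part (1).

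Next, for the cosine I would take the trace in \eqref{eq:LagRot_3d_cos}: since the trace is basis-independent, $\tr\R=\tr\mathbf K_{t_0}^t=\sum_i\langle\bxi_i,\bnabla\Ff\,\bxi_i\rangle/\sqrt{\lambda_i}$, which is exactly \eqref{eq:cosPRA_3d}. For the sine I would take the skew-symmetric part of the Rodrigues formula \eqref{rotMat_theta} in the $\{\bxi_k\}$ basis. The term $(1-\cos\theta)[\mathbf P_{t_0}^t]^2$ is symmetric and drops out, and $\mathbf I$ is symmetric too, so $\tfrac12(\mathbf K_{t_0}^t-[\mathbf K_{t_0}^t]^\top)=\sin\theta_{t_0}^t\cdot(\text{skew part of }\mathbf P_{t_0}^t\text{ in the }\{\bxi_k\}\text{ basis})$. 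Now $\mathbf P_{t_0}^t$ is the cross-product operator with $\mathbf r_{t_0}^t$, so in an orthonormal frame its $(i,j)$ entry is $-\epsilon_{ijk}(r_{t_0}^t)_k=-\epsilon_{ijk}e_k$ (using that $\{\bxi_k\}$ is positively oriented, so that the Levi-Civita symbol transforms trivially). Comparing the $(i,j)$ entries for any $i\neq j$ yields
\begin{equation}
\frac{\langle\bxi_i,\bnabla\Ff\,\bxi_j\rangle-\langle\bxi_j,\bnabla\Ff\,\bxi_i\rangle}{2\sqrt{\lambda_j}\cdot(-\epsilon_{ijk}e_k)}\ \Big/\ \text{(rescaling)}\ =\ \sin\theta_{t_0}^t,
\end{equation}
and after clearing the $\sqrt{\lambda_j}$ consistently with \eqref{eq:RinCGbasis} one recovers \eqref{eq:sinPRA_3d}. (A small bookkeeping point: one should check the numerator $\langle\bxi_i,\R\bxi_j\rangle-\langle\bxi_j,\R\bxi_i\rangle$ equals $\langle\bxi_i,\bnabla\Ff\bxi_j\rangle/\sqrt{\lambda_j}-\langle\bxi_j,\bnabla\Ff\bxi_i\rangle/\sqrt{\lambda_i}$, and that the claimed formula's common $\sqrt{\lambda_j}$ normalization is the intended reading — I would state \eqref{eq:sinPRA_3d} with the normalization matching \eqref{eq:RinCGbasis} and note the Levi-Civita sign is fixed by the orientation convention on the eigenbasis.)

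For part (2), the two-dimensional case, I would specialize: the motion is planar, $\mathbf r_{t_0}^t=\mathbf e_3$ is the constant out-of-plane normal, and $\R$ restricted to the plane is the standard $2\times2$ rotation $\begin{pmatrix}\cos\theta&-\sin\theta\\\sin\theta&\cos\theta\end{pmatrix}$ in any positively oriented orthonormal planar basis. Taking that basis to be $\{\bxi_1,\bxi_2\}$ and reading off the entries via \eqref{eq:RinCGbasis} gives $\cos\theta_{t_0}^t=\langle\bxi_i,\bnabla\Ff\bxi_i\rangle/\sqrt{\lambda_i}$ for $i=1,2$ and $\sin\theta_{t_0}^t=(-1)^j\langle\bxi_i,\bnabla\Ff\bxi_j\rangle/\sqrt{\lambda_j}$ for $(i,j)=(1,2)$ or $(2,1)$, which is \eqref{eq:PRA2D}. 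The main obstacle I anticipate is not any deep step but the careful handling of orientation and index conventions: making sure the eigenbasis $\{\bxi_k\}$ is chosen right-handed so that $\mathbf P_{t_0}^t$ has the stated Levi-Civita form, that the sign $(-1)^j$ in the planar sine formula is consistent with the chosen ordering $\lambda_1\le\lambda_2$, and that the normalization by $\sqrt{\lambda_j}$ versus $\sqrt{\lambda_i}$ in \eqref{eq:sinPRA_3d} is stated unambiguously — these are the places where an otherwise routine calculation can pick up a spurious sign.
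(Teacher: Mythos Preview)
Your proposal is correct and follows essentially the same route as the paper: express $\R$ in the Cauchy--Green eigenbasis via the identity $\langle\bxi_i,\R\,\bxi_j\rangle=\langle\bxi_i,\bnabla\Ff\,\bxi_j\rangle/\sqrt{\lambda_j}$, take the trace for the cosine, take the skew part for the sine, and identify $\mathbf e$ as the coordinates of $\mathbf r_{t_0}^t$ in the $\{\bxi_k\}$ frame (hence as the unit eigenvector of $\mathbf K_{t_0}^t$). Your caution about the $\sqrt{\lambda}$ normalization in the sine numerator is apt: the paper's own derivation keeps the factors $1/\sqrt{\lambda_j}$ and $1/\sqrt{\lambda_i}$ exactly as you anticipate, and they are tacitly absorbed when the final displayed formula \eqref{eq:sinPRA_3d} is stated.
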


\begin{proof}
	See Appendix \ref{app:proof_prop1}. 
\end{proof}

Using both expressions in the formulas \eqref{eq:PRA3D} (or formulas \eqref{eq:PRA2D}, in 
the two-dimensional case), the four-quadrant 
polar rotation angle $\theta_{t_{0}}^{t}\in[0,2\pi)$ can be reconstructed
as
\begin{equation}
\theta_{t_{0}}^{t} =\left[1-{\rm 
	sign\,}\left(\sin\theta_{t_{0}}^{t}\right)\right]\pi+{\rm 
	sign\,}\left(\sin\theta_{t_{0}}^{t}\right)\cos^{-1}
\left(\cos\theta_{t_{0}}^{t}\right),
\label{eq:fullpolar}
\end{equation}
where
\begin{equation*}
\rm{sign}(\alpha) = \left\{
\begin{array}{l l}
1  &    \qquad       \rm{if}\quad \alpha\geq 0\\
-1 &    \qquad       \rm{if}\quad \alpha< 0
\end{array}	\right.
\end{equation*}

For completeness, in Appendix \ref{app:total_rot}, we also derive a formula for the
total rotation of an arbitrary material element, not just for the
strain eigenvectors. Evaluating this general formula is computationally
more costly, as it involves advecting initial directions by the flow
map through all intermediate times within the interval $[t_{0},t]$.
In addition, due to the non-rigid-body nature of deformation along
a trajectory, the total material rotation will be different for different
material elements. When evaluated on initial directions aligned with
$\bxi_{1}$ and $\bxi_{2}$, however, this total Lagrangian rotation
agrees with the PRA modulo multiples of $2\pi$. 

\section{Polar LCS}
A recent approach to the systematic detection of elliptic Lagrangian
coherent structures (LCS) targets closed material lines that exhibit
no filamentation over the finite time interval $[t_{0},t]$ (\citet{bhEddy,LCS_review}).
These elliptic LCSs turn out to be uniformly stretching closed material
lines, i.e., all their subsets exhibit the same relative stretching.
Outermost members of nested elliptic LCS families then serve as the
ideal boundaries of perfectly coherent elliptic islands.

Here we propose a dual approach to elliptic LCSs by requiring uniformity
in the polar rotation of material elements forming the LCS, as opposed
to uniformity in their stretching. 
\begin{defn}
	A \emph{polar Lagrangian coherent structure} (polar LCS) over the
	time interval $[t_{0},t]$ is a closed (i.e., tubular in 3D and circular
	in 2D) and connected codimension-one material surface whose time $t_{0}$
	position is a level set of $\theta_{t_{0}}^{t}(\vc x_{0})$.
	\label{def:polarLCS}
\end{defn}
As any material surface, a polar LCS is invariant under the flow.
It is formed by trajectories starting from a closed and connected
level set of $\theta_{t_{0}}^{t}(\vc x_{0})$ at time $t_{0}$. The
following simple observation shows that polar LCSs can be detected
as connected and closed level sets of trigonometric functions of $\theta_{t_{0}}^{t}(\vc 
x_{0})$,
and hence are directly computable from the formulas \eqref{eq:PRA3D}-\eqref{eq:PRA2D}.
\begin{prop}
	Connected components of the level sets of $\cos\theta_{t_{0}}^{t}$
	and $\sin\theta_{t_{0}}^{t}$coincide with connected components of
	the level sets of $\theta_{t_{0}}^{t}(\vc x_{0})$.
	\label{prop:connComp}
\end{prop}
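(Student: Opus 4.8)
The plan is to reduce the claim about level sets of $\theta_{t_0}^t$ to an elementary statement about continuous functions composed with the trigonometric maps. First I would observe that $\theta_{t_0}^t(\vc x_0)$ is a continuous function on $\mathcal D$ (this follows from the smoothness of the flow map, the continuity of the eigenvalues/eigenvectors of $\Cf$, and the explicit formulas \eqref{eq:PRA3D}--\eqref{eq:PRA2D}, together with \eqref{eq:fullpolar}). Similarly $\cos\theta_{t_0}^t$ and $\sin\theta_{t_0}^t$ are continuous. The key point is that on the codomain $[0,2\pi)$, the map $\theta\mapsto\cos\theta$ is two-to-one except at $\theta=0,\pi$, and likewise $\theta\mapsto\sin\theta$ is two-to-one except at $\theta=\pi/2,3\pi/2$; but the \emph{pair} $(\cos\theta,\sin\theta)$ determines $\theta$ uniquely. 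So a level set of $\theta_{t_0}^t$ is always contained in a level set of $\cos\theta_{t_0}^t$ (and of $\sin\theta_{t_0}^t$), while a level set of $\cos\theta_{t_0}^t$ is a union of at most two level sets of $\theta_{t_0}^t$.

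Next I would carry out the containment argument at the level of connected components. Fix a value and let $\Gamma$ be a connected component of a level set of, say, $\cos\theta_{t_0}^t$, corresponding to $\cos\theta_{t_0}^t\equiv c$ on $\Gamma$. On $\Gamma$, the angle $\theta_{t_0}^t$ can take only the values $\theta_\ast$ and $2\pi-\theta_\ast$ (the two preimages of $c$ under cosine on $[0,2\pi)$), which are distinct unless $c=\pm1$. The sets $\{\theta_{t_0}^t=\theta_\ast\}\cap\Gamma$ and $\{\theta_{t_0}^t=2\pi-\theta_\ast\}\cap\Gamma$ are disjoint; I would show each is relatively closed in $\Gamma$. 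The first is closed because $\theta_{t_0}^t$ is continuous. For the second: on $\Gamma$, wherever $\theta_{t_0}^t\neq\theta_\ast$ we must have $\theta_{t_0}^t=2\pi-\theta_\ast$, so the second set equals $\Gamma\setminus\{\theta_{t_0}^t=\theta_\ast\}$, again using that $\theta_{t_0}^t$ takes only these two values on $\Gamma$ — but I should instead argue directly that it is closed, which follows since $2\pi-\theta_\ast$ is also a single value and preimages of points under a continuous map are closed. Hence $\Gamma$ is partitioned into two relatively closed sets; by connectedness one of them is empty, so $\theta_{t_0}^t$ is constant on $\Gamma$. Therefore $\Gamma$ is contained in a single level set of $\theta_{t_0}^t$, and being connected it lies in one connected component of that level set. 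The degenerate case $c=\pm1$ is immediate since then $\theta_{t_0}^t$ already takes only one value on $\Gamma$. The same reasoning applies verbatim to $\sin\theta_{t_0}^t$, with preimage pair $\{\theta_\ast,\pi-\theta_\ast\}$ or $\{\theta_\ast,3\pi-\theta_\ast \bmod 2\pi\}$ and the degenerate values $\pm1$.

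For the reverse inclusion, I would take a connected component $\Lambda$ of a level set $\{\theta_{t_0}^t=\theta_0\}$; then $\cos\theta_{t_0}^t\equiv\cos\theta_0$ on $\Lambda$, so $\Lambda$ is a connected subset of a level set of $\cos\theta_{t_0}^t$ and hence lies in one of its connected components, say $\Gamma$. By the previous paragraph $\Gamma$ lies in a single connected component of a level set of $\theta_{t_0}^t$; since $\Lambda\subset\Gamma$ and $\Lambda$ is itself such a component, we get $\Lambda=\Gamma$. This gives the bijective correspondence, i.e., the connected components coincide. The argument for $\sin$ is identical.

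I do not anticipate a serious obstacle here; the statement is essentially a soft topological observation. The one point requiring a little care is making sure the trichotomy on $[0,2\pi)$ is handled cleanly — in particular the non-degenerate cases (two distinct preimages) versus the degenerate ones ($c=\pm1$ for cosine, $s=\pm1$ for sine, where the level set of the trig function already equals a level set of $\theta_{t_0}^t$) — and ensuring continuity of $\theta_{t_0}^t$ is actually invoked, rather than tacitly assumed, since the four-quadrant reconstruction \eqref{eq:fullpolar} involves a $\mathrm{sign}$ function that is discontinuous where $\sin\theta_{t_0}^t$ changes sign; one must note that at such points $\cos\theta_{t_0}^t=\pm1$, so $\theta_{t_0}^t\in\{0,\pi\}$ and the apparent jump is cancelled, leaving $\theta_{t_0}^t$ continuous. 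This is the only slightly delicate verification in the proof.
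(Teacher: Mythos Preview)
Your argument is correct and rests on the same core observation as the paper---that on a connected component of a level set of $\cos\theta_{t_0}^t$, the angle $\theta_{t_0}^t$ can take at most two values, and connectedness forces it to take only one---but the execution differs. The paper gives a short contradiction argument via paths: if two points in the same connected component $\mathcal{L}$ of a $\cos\theta_{t_0}^t$-level set had different $\theta_{t_0}^t$ values, then along any path in $\mathcal{L}$ joining them $\theta_{t_0}^t$ would vary continuously and $\cos\theta_{t_0}^t$ could not remain constant. This quietly conflates connectedness with path-connectedness and treats only one direction of the coincidence. Your closed-partition argument uses the topological definition of connectedness directly, handles the degenerate values $c=\pm 1$ separately, and spells out the reverse inclusion, so it is the more complete of the two.

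One minor correction to your final paragraph: the jump in \eqref{eq:fullpolar} is indeed cancelled at $\theta_{t_0}^t=\pi$, but not at $\theta_{t_0}^t=0$; as a map into $[0,2\pi)$, the PRA can be genuinely discontinuous where the polar rotation passes through the identity. This does not harm your main argument, though, because the relative closedness of $\{\theta_{t_0}^t=\theta_\ast\}\cap\Gamma$ and $\{\theta_{t_0}^t=2\pi-\theta_\ast\}\cap\Gamma$ follows already from the continuity of $\sin\theta_{t_0}^t$ (on $\Gamma$ these two sets are exactly $\{\sin\theta_{t_0}^t=\sin\theta_\ast\}$ and $\{\sin\theta_{t_0}^t=-\sin\theta_\ast\}$), so you never actually need $\theta_{t_0}^t$ itself to be continuous as a real-valued function.
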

\begin{proof}
	Assume the contrary, i.e., the existence of two points $\mathbf{x}_{0}$
	and $\mathbf{\hat{x}}_{0}$ that are in the same connected component
	of a level set $\mathcal{L}$ of $\cos\theta_{t_{0}}^{t}(\vc x_{0})$
	but on different connected level sets of $\theta_{t_{0}}^{t}(\vc x_{0})$.
	Then on any continuous path connecting $\mathbf{x}_{0}$ and $\mathbf{\hat{x}}_{0}$,
	the polar rotation angle $\theta_{t_{0}}^{t}$ should change continuously
	from $\theta_{t_{0}}^{t}(\mathbf{x}_{0})$ to 
	$\theta_{t_{0}}^{t}(\mathbf{\hat{x}}_{0})\neq\theta_{t_{0}}^{t}(\mathbf{x}_{0}),$
	and hence $\cos\theta_{t_{0}}^{t}$ cannot be constant along this
	path. Since $\mathbf{x}_{0}$ and $\mathbf{\hat{x}}_{0}$ are in the
	connected set $\mathcal{L}$, there is therefore a continuous path
	connecting $\mathbf{x}_{0}$ and $\mathbf{\hat{x}}_{0}$ within $\mathcal{L}$
	along which $\cos\theta_{t_{0}}^{t}(\vc x_{0})$ cannot be constant.
	But this contradicts the assumption that $\mathcal{L}$ is a level
	set of $\cos\theta_{t_{0}}^{t}(\vc x_{0})$. The argument for $\sin\theta_{t_{0}}^{t}$
	is identical.
\end{proof}

A practical consequence of Proposition \ref{prop:connComp} is that connected level sets of
$\theta_{t_{0}}^{t}(\vc x_{0})$ can be constructed as those of $\cos\theta_{t_{0}}^{t}$
and $\sin\theta_{t_{0}}^{t}$, without verifying the orientability
of $\mathbf{r}_{t_{0}}^{t}(\mathbf{x}_{0})$ on $U$. 
This renders the computation of the tensor $\vc 
K_{t_0}^t$ and the rotation axis $\vc r_{t_0}^t$ unnecessary, as one can 
compute the two-quadrant angle $\theta_{t_0}^t\in[0,\pi]$ from equation 
\eqref{eq:cosPRA_3d} as
\begin{equation}
\theta_{t_0}^t=\cos^{-1}\left[\frac{1}{2}\left(\sum_{i=1}^{3}\frac{\left\langle 
	\bxi_{i},\nabla\Ff\bxi_{i}\right\rangle }{\sqrt{\lambda_{i}}}-1\right)\right].
\label{eq:th_2q}
\end{equation}
Proposition \ref{prop:connComp} ensures that the level sets of $\theta^t_{t_0}$ computed 
from \eqref{eq:th_2q} coincide with those of the four-quadrant PRA angle computed from 
\eqref{eq:fullpolar}.

All quantities derived from the deformation gradient $\bnabla\Ff$ are invariant
with respect to time-dependent translations of the coordinate frame.
Therefore, polar LCSs are Galilean invariant objects. For two-dimensional
flows, polar LCSs also turn out to be invariant under time-dependent
rotations of the frame. In the language of continuum mechanics \citep{truesdell},
polar LCSs in two dimensions are objective.
\begin{prop}
	In two-dimensional flows, a polar LCS over the time interval $[t_{0},t]$
	is objective, i.e., invariant under coordinate changes of the form
	\begin{equation}
	\mathbf{x}=\mathbf{Q}(t)\mathbf{y}+\mathbf{b}(t),
	\label{eq:Eucl}
	\end{equation}
	where $\mathbf{Q}(t)\in SO(2)$ and $\mathbf{b}(t)\in\mathbb{R}^{2}$
	are smooth functions of time $t$.
	\label{prop:obj}
\end{prop}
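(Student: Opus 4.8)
The plan is to track how each ingredient in the definition of a planar polar LCS transforms under the Euclidean change of frame \eqref{eq:Eucl}, and to show that the time-$t_0$ level set of $\theta_{t_0}^t$ is carried to the time-$t_0$ level set of the transformed PRA, so that the material surface itself is unchanged as a set of initial points. First I would record the transformation rule for the flow map: if $\vc x = \vc Q(t)\vc y + \vc b(t)$, the $\vc y$-frame flow map $\tilde{\vc F}_{t_0}^t$ satisfies $\tilde{\vc F}_{t_0}^t(\vc y_0) = \vc Q^\top(t)\big(\vc F_{t_0}^t(\vc x_0) - \vc b(t)\big)$ with $\vc x_0 = \vc Q(t_0)\vc y_0 + \vc b(t_0)$, hence the deformation gradients are related by $\bnabla\tilde{\vc F}_{t_0}^t = \vc Q^\top(t)\,\bnabla\Ff\,\vc Q(t_0)$. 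From this, the Cauchy--Green tensor transforms as $\tilde{\vc C}_{t_0}^t = \vc Q^\top(t_0)\,\Cf\,\vc Q(t_0)$, so its eigenvalues $\lambda_i$ are unchanged and its eigenvectors transform as $\tilde{\bxi}_i = \vc Q^\top(t_0)\bxi_i$ (up to sign, which is immaterial).

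Next I would substitute these transformation rules into the two-dimensional PRA formulas \eqref{eq:PRA2D}. For the cosine, $\langle \tilde{\bxi}_i, \bnabla\tilde{\vc F}_{t_0}^t \tilde{\bxi}_i\rangle = \langle \vc Q^\top(t_0)\bxi_i,\ \vc Q^\top(t)\bnabla\Ff\vc Q(t_0)\vc Q^\top(t_0)\bxi_i\rangle = \langle \bxi_i,\ \vc Q(t_0)\vc Q^\top(t)\,\bnabla\Ff\,\bxi_i\rangle$. The point is that in two dimensions $\vc Q(t_0)\vc Q^\top(t) \in SO(2)$ is a planar rotation by some angle $\beta(t_0,t)$ (independent of $\vc x_0$), and $\bnabla\Ff\bxi_i$ is a planar vector, so $\langle\bxi_i, \vc Q(t_0)\vc Q^\top(t)\bnabla\Ff\bxi_i\rangle$ and $\langle\bxi_j, \vc Q(t_0)\vc Q^\top(t)\bnabla\Ff\bxi_i\rangle$ combine exactly as $\cos\beta$ and $\sin\beta$ times the original inner products. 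Dividing by $\sqrt{\lambda_i}$ as in \eqref{eq:PRA2D}, this shows $\cos\tilde\theta_{t_0}^t = \cos(\theta_{t_0}^t - \beta)$ and $\sin\tilde\theta_{t_0}^t = \sin(\theta_{t_0}^t - \beta)$, i.e. $\tilde\theta_{t_0}^t = \theta_{t_0}^t - \beta(t_0,t) \pmod{2\pi}$, where $\beta$ is a spatial constant. Hence the PRA is not itself objective, but it changes only by an additive constant over $\mathcal D$.

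Finally I would conclude: a level set $\{\vc y_0 : \tilde\theta_{t_0}^t(\vc y_0) = c\}$ is exactly $\{\vc y_0 : \theta_{t_0}^t(\vc Q(t_0)\vc y_0 + \vc b(t_0)) = c + \beta\}$, which is the preimage under the (invertible, affine) map $\vc y_0 \mapsto \vc Q(t_0)\vc y_0 + \vc b(t_0)$ of a level set of $\theta_{t_0}^t$. Since this map is precisely the coordinate change at the initial time $t_0$, the two level sets describe the same material surface; connectedness and closedness are preserved because the map is a diffeomorphism. Therefore the polar LCS, as a material surface, is frame-invariant, i.e. objective. The main obstacle — really the only subtle point — is establishing that $\vc Q(t_0)\vc Q^\top(t)$ acts on the relevant inner products as a pure rotation by a \emph{position-independent} angle; this is where the restriction to two dimensions (and the fact that $\bnabla\Ff\bxi_i$ lies in the plane of motion) is essential, since in three dimensions $\vc Q(t_0)\vc Q^\top(t)$ need not commute with the rotational part of $\bnabla\Ff$ and the clean additive-constant relation breaks down.
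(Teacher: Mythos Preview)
Your proof is correct and arrives at the same key fact as the paper: under the frame change \eqref{eq:Eucl} the PRA shifts by a position-independent additive constant, so its level sets (and hence polar LCSs) are preserved. The paper reaches this conclusion by a slightly more direct route: instead of substituting the transformation rules for $\bnabla\Ff$, $\lambda_i$, $\bxi_i$ into the explicit PRA formulas \eqref{eq:PRA2D}, it uses uniqueness of the polar decomposition to read off $\tilde{\vc R}_{t_0}^t=\vc Q^\top(t)\,\vc R_{t_0}^t\,\vc Q(t_0)$ directly, and then observes that in two dimensions this product of three rotations about the common out-of-plane axis is itself a rotation by $\theta_{t_0}^t+q(t_0)-q(t)$, where $q(\cdot)$ is the angle of $\vc Q(\cdot)$. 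Your computation via the eigenvector inner products is equivalent but carries a bit more bookkeeping (sign/orientation conventions for $\tilde{\bxi}_i$); the paper's polar-decomposition argument sidesteps that and makes the failure of the argument in three dimensions (non-commuting rotation axes) equally transparent.
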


\begin{proof}
	See Appendix \ref{app:proof_prop3}.
\end{proof}
An elliptic island marked by the PRA has a natural center point: the
PRA extremum point surrounded by closed PRA contours. This leads to
the following definition of a Lagrangian vortex center:
\begin{defn}
	A \emph{Lagrangian vortex center} over the time interval $[t_{0},t]$
	is a set of trajectories evolving from a connected, codimension-two
	level set of $\theta_{t_{0}}^{t}(\vc x_{0})$. 
	\label{def:vc}
\end{defn}
A Lagrangian vortex center identified from the PRA is, therefore,
composed of a single trajectory in two dimensions and of a one-parameter
family of trajectories (i.e., a material line) in three dimensions.
Despite recent progress in the accurate detection of coherent Lagrangian
vortex boundaries \citep{LCS_review}, approaches to Lagrangian vortex
center definition and detection have notably been missing. As we illustrate
in Section \ref{sec:2Dturb} below, vortex centers defined as PRA extrema indeed
show distinguished behavior: they capture the translational motion
of an elliptic island without being affected by the rotational motion
of trajectories inside the island. As connected level sets of the
PRA, the Lagrangian vortex centers defined in Definition \ref{def:vc} are also
objective in two-dimensional flows (cf. Proposition \ref{prop:obj}).

\section{Examples}

\label{sec:examples} In this section, we compute the PRA on several
examples to illustrate how its closed level curves (i.e., initial
positions of polar LCSs) highlight the internal structure of elliptic
islands in detail.

\subsection{Standard map}\label{sec:standMap} 
We first consider the standard map 
\begin{align}
I_{n+1} & =I_{n}+\epsilon\sin\phi_{n},\nonumber \\
\phi_{n+1} & =\phi_{n}+I_{n+1},\label{eq:standardMap}
\end{align}
which is a Poincaré map $\mathcal{P}$ of a rotor excited by a periodic
impulsive force~\citep{ott}. In the absence of the impulse, i.e.,
for $\epsilon=0$, the angular momentum $I_{n}$ is constant and the
angular position $\phi_{n}$ increases linearly as an integer multiple
of the angular momentum.

\begin{figure}[t!]
	\centering 
	\includegraphics[width=0.95\textwidth]{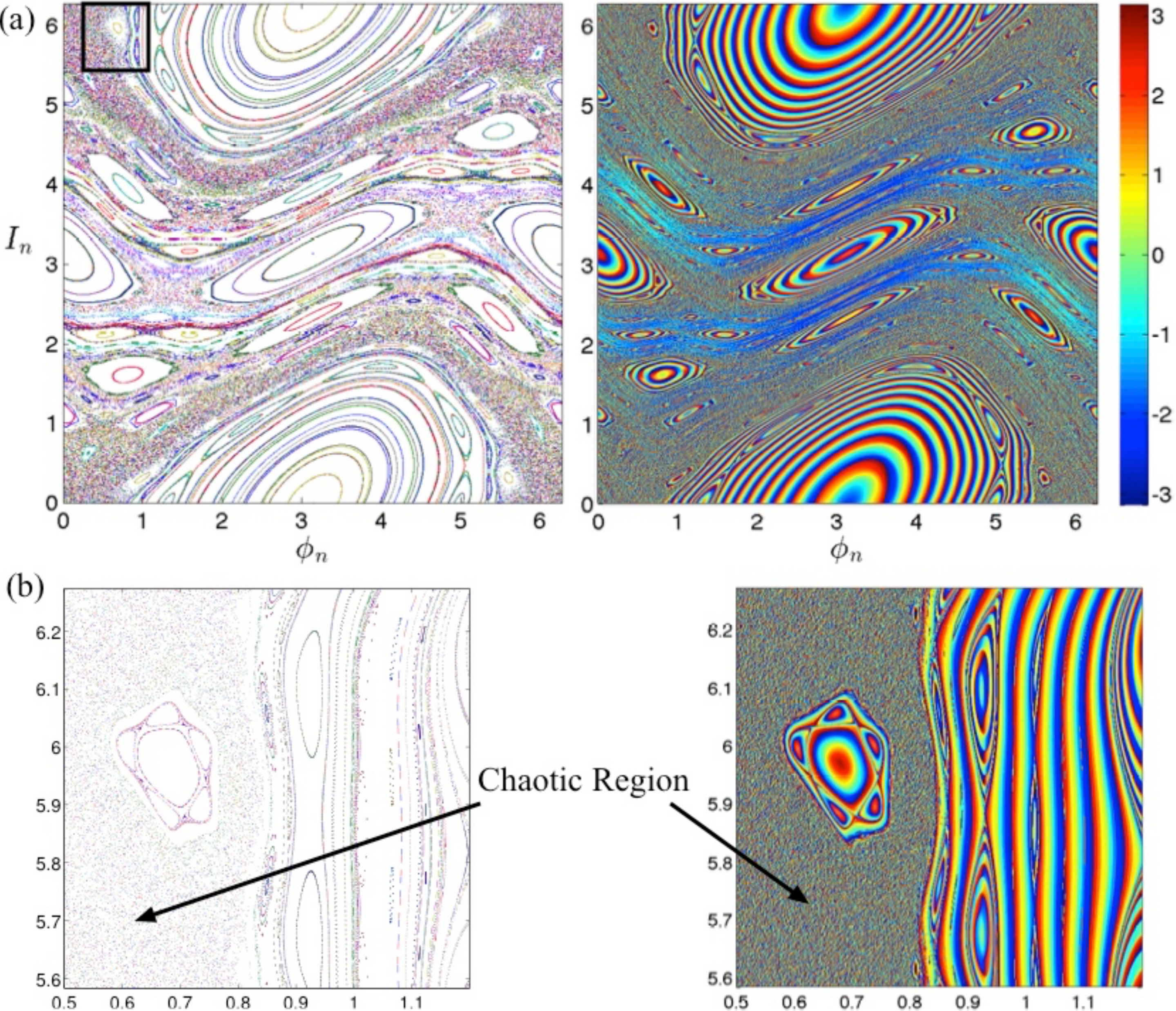}
	\caption{Standard map. (a) Left: $1000$ iterations of the standard map for
		$400$ uniformly distributed initial conditions over the torus 
		$[0,2\pi]\times[0,2\pi]$. 
		Right: The PRA $\theta_{t_0}^t$ for $200$ iterations of the standard map, clearly 
		marking 
		polar LCSs
		(closed contours) and Lagrangian vortex centers (local extrema) (b)
		The close-up view of the region marked by a rectangle in (a).}
	\label{fig:standardMap}
\end{figure}

For $\epsilon\neq0$, however, the system can exhibit complicated
dynamics. Depending on the initial condition $(I_{0},\phi_{0})$,
the trajectories may be periodic, quasi-periodic or chaotic. The quasi-periodic
trajectories lie on KAM tori, the classic examples of vortical structures
that we wish to visualize through the PRA. 

The left plot in Fig.~\ref{fig:standardMap}a shows $1000$ iterations
of the standard map for $400$ uniformly distributed initial conditions
and $\epsilon=1$. This reveals invariant KAM tori, resonance islands
and chaotic regions. The right panel of the same figure shows the PRA,
computed from formula \eqref{eq:fullpolar} with $i=2$, with the flow
map being equal to $200$ iterations of the map \eqref{eq:standardMap},
i.e. $\Ff(\vc x_{0})=\mathcal{P}^{200}(\vc x_{0})$, 
where $\vc x_{0}=(\phi_{0},I_{0})$
and $(\phi_{n+1},I_{n+1})=\mathcal{P}(\phi_{n},I_{n})$. To ensure
the accuracy of the finite differences for the computation of the
deformation gradient $\bnabla\Ff$, we use a dense grid of initial
conditions consisting of $1000\times1000$ uniformly distributed points over the phase 
space $\mathbb T^2=[0,2\pi]\times[0,2\pi]$.

Figure~\ref{fig:standardMap}b shows a close-up of a region of the phase
space containing chaotic trajectories, KAM tori and a period-$5$
resonance island. For this close-up view, the Poincaré map is recomputed
from $1000$ iterations of $2500$ initial conditions. The corresponding
PRA plot on the right is computed only from $500$ iterations, i.e.,
from $\Ff=\mathcal{P}^{500}$.

We conclude that the KAM tori and resonance islands are sharply enhanced
by the PRA relative to a simple iteration of the map, even though
the number of iterations used in constructing the PRA plot is only
half the number used for the Poincar\'e map. The chaotic region is
marked by small-scale rapid variations in the PRA, in line with the
sensitive dependence of the rotation angle on initial conditions in these regions. 

Figure~\ref{fig:standardMap} also shows that the center-type fixed
points in the elliptic islands are clearly marked with local extrema
of the PRA, supporting the idea of defining Lagrangian elliptic centers
as stated in Definition~\ref{def:vc}.

\subsection{Two-dimensional turbulence}\label{sec:2Dturb} 
Consider the Navier--Stokes equation 
\begin{equation}
\partial_{t}\vc u+\vc u\cdot\vc\bnabla\vc u=-\bnabla p+\nu\Delta\vc u+\vc f,\ \ \ 
\bnabla\cdot\vc u=0,\label{eq:nse}
\end{equation}
where $\nu$ is the kinematic viscosity and $\vc f$ denotes the forcing.
For an ideal two-dimensional fluid flow ($\nu=0$ and $\vc f=\vc0$
), the vorticity $\omega$, given by $\bnabla\times\vc u=(0,0,\omega)$,
is preserved along fluid trajectories, i.e., 
\begin{equation}
\frac{\mathrm{D}\omega}{\mathrm{D}t}=0,\label{eq:vort_pres}
\end{equation}
where $\frac{\mathrm{D}\;}{\mathrm{D}t}:=\partial_{t}+\vc u\cdot\bnabla$
is the material derivative. Therefore, closed level curves of vorticity
are material curves, acting as barriers to the transport of fluid
particles. In the presence of molecular diffusion and external forcing,
however, vorticity is not a material invariant and hence its closed contours
no longer signal elliptic islands for fluid trajectories.

\begin{figure}[t!]
	\centering 
	\includegraphics[width=0.46\textwidth]{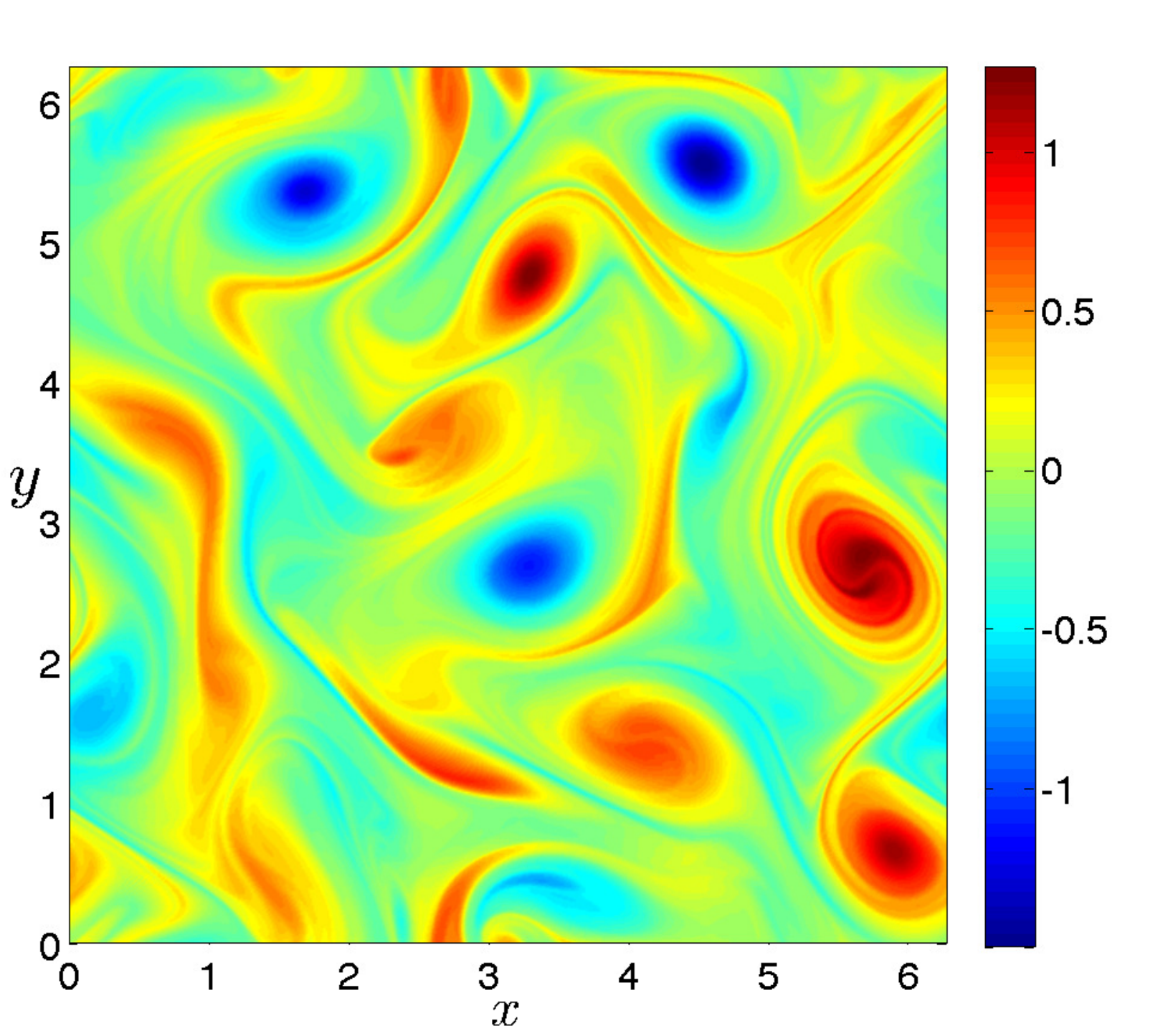}
	\hspace{.02\textwidth}
	\includegraphics[width=0.46\textwidth]{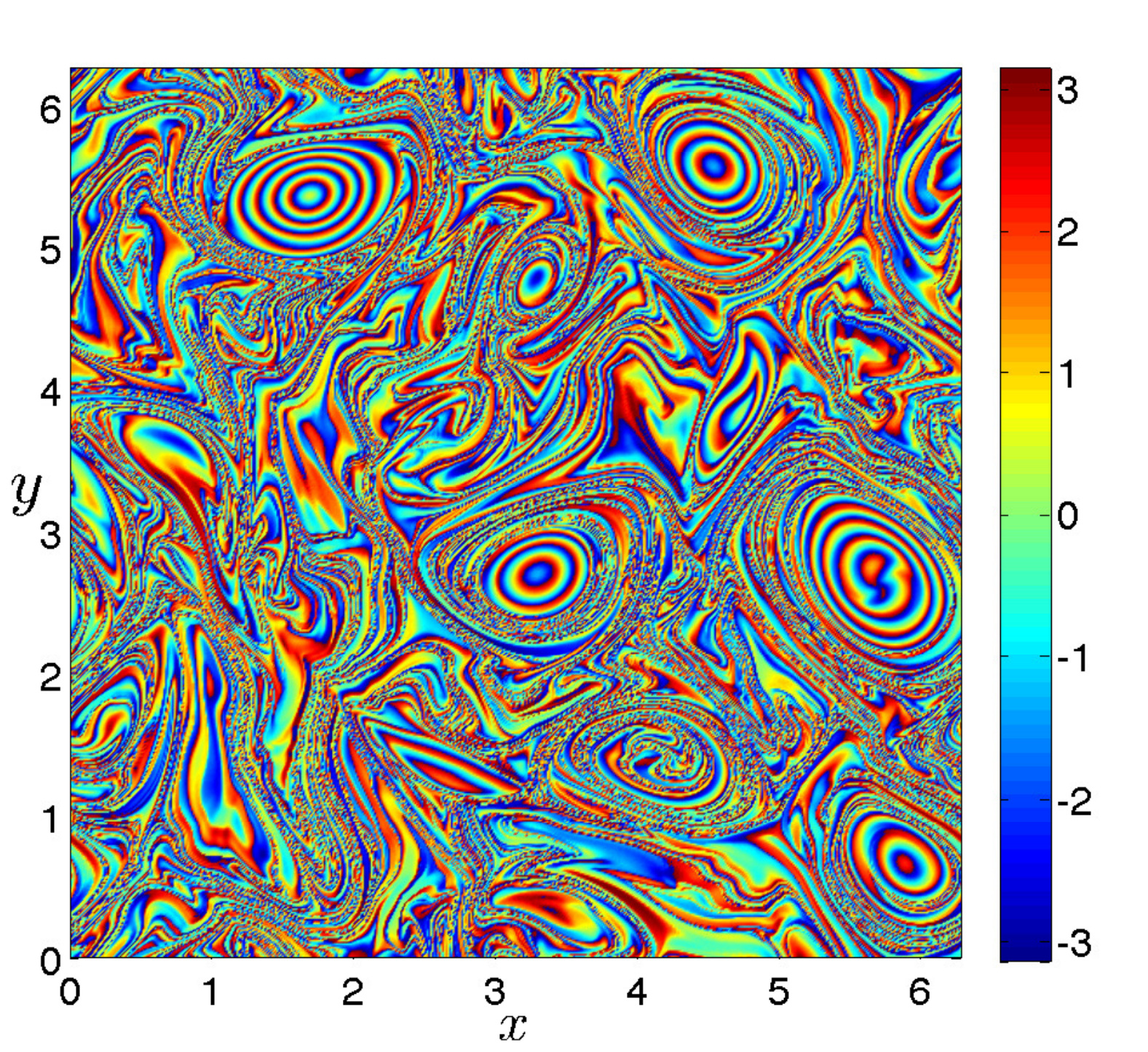} 
	\caption{Left: 
		Vorticity $\omega$ at the initial time $t=50$. Right: The PRA $\theta_{t_0}^t$ 
		computed
		from formula \eqref{eq:fullpolar} over the time interval $[50,100]$.}
	\label{fig:2Dturb_LagRot}
\end{figure}

To illustrate the use of PRA in detecting elliptic islands in a turbulent
flow, we solve the Navier--Stokes equation \eqref{eq:nse} with $\nu=10^{-5}$
on the domain $\mathcal D=[0,2\pi]\times[0,2\pi]$ with periodic boundary conditions.
We use a pseudo-spectral method with $512^{2}$ modes to evaluate
the spatial partial derivatives and the nonlinear term. The external forcing
is random in phase and only active over the wave-numbers $3.5<k<4.5$.
The forcing amplitude is time-dependent and chosen to balance the
instantaneous enstrophy dissipation $-\nu\int_{U}|\bnabla\omega(\vc x,t)|^{2}\id\vc x$.
The time integration is carried out by a variable step-size, fourth-order
Runge--Kutta method \citep{ode45}. We solve the equation up to time
$t=100$. We observe that the turbulent flow is fully developed after
$50$ time units. Therefore, we choose times $t_0=50$ and $t=100$ as the initial and 
final times for the computation of the PRA $\theta_{t_0}^t$.

Such two-dimensional turbulent flows tend to generate long-lasting
coherent vortices \citep{mcwilliams1990vortices}, which are also
prevalent in geophysical flows \citep{GFDfund}. Highly coherent Lagrangian
signatures of such vortices have been recently identified as regions
bounded by uniformly stretching material lines \citep{bhEddy,LCS_review}.

Here, we take an alternative approach and identify coherent Lagrangian
vortices as regions filled with polar LCSs. In other words, we seek
the elliptic islands of turbulence as regions of closed
material lines that pointwise have the same rigid-body rotation component
in their deformation over the time interval of interest.

Figure~\ref{fig:2Dturb_LagRot} (right panel) shows the PRA computed from formula
\eqref{eq:fullpolar} for $512\times512$ uniformly distributed initial
conditions. The polar LCSs are clearly visible as concentric closed
contours of $\theta_{t_{0}}^{t}(\vc x_{0})$. Figure~\ref{fig:2Dturb_contour}
shows a closeup view of a coherent Lagrangian vortex identified from
the PRA plot. Note how the PRA shows a sharp distinction between the
vortical region and the surrounding chaotic background. As in the
case of the standard map (see Fig.~\ref{fig:standardMap}), the chaotic
region is marked by small-scale, sharp variations of the Lagrangian
rotation due to sensitive dependence of material rotation angle on
initial conditions. 
\begin{figure}[h!]
	\centering \includegraphics[width=1\textwidth]{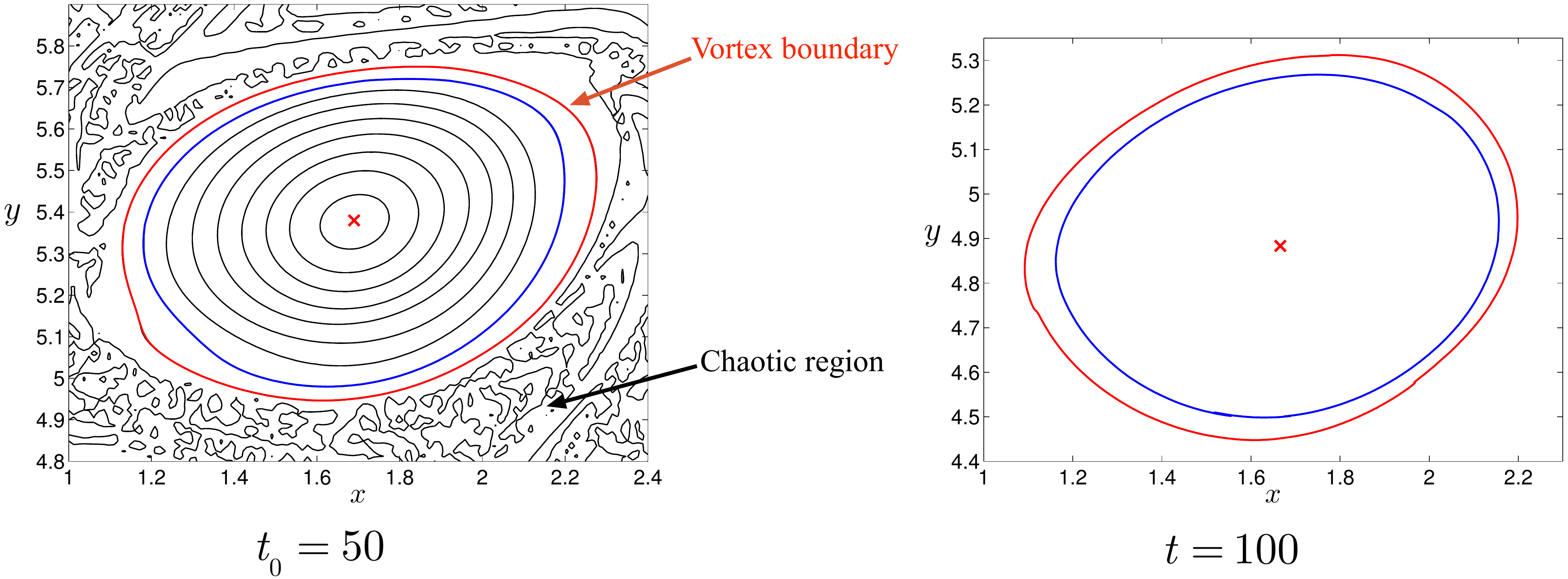}
	\protect\caption{Left: Contours of the PRA signaling coherent and chaotic regions.
		Right: Advected image of select contours to the final time $t=100$.
		The local extremum of the PRA (marked by a cross) defining the Lagragian
		vortex center by Definition 3.}
	\label{fig:2Dturb_contour} 
\end{figure}

While the velocity field is well-resolved, resolving small-scale Lagrangian
structures requires significantly higher resolution \citep{batchelor1959,aref,Berti2014}.
At the present resolution, the Lagrangian structures in the chaotic
region are not well-resolved. Nonetheless, the boundary of the vortex
can be approximated by the contour across which the PRA transitions
from concentric large-scale contours to small-scale sharp variations
(see the red-colored contour in Fig.~\ref{fig:2Dturb_contour}).

We now illustrate that the large-scale polar LCSs, defined by closed
contours of the PRA (cf. Definition \ref{def:polarLCS}) indeed remain coherent under
advection. We advect two such contours under the flow, with their
advected positions shown in the right panel on Fig.~\ref{fig:2Dturb_contour}
at time $t=100$. 

As a measure of coherence we define relative stretching of material lines as 
$\left[\ell(t)-\ell(t_0)\right]/\ell(t_0)$, where $\ell$ denotes the length of the 
material line as a function of time.
The relative stretching of the blue and red contours
are $2.65\%$ and $-1.38\%$, respectively. These relative stretching
values remain in the order of stretching exhibited by perfectly coherent
elliptic LCSs obtained from the geodesic LCS theory~\citep{bhEddy}.

\begin{figure}[h!]
	\centering 
	\includegraphics[width=1\textwidth]{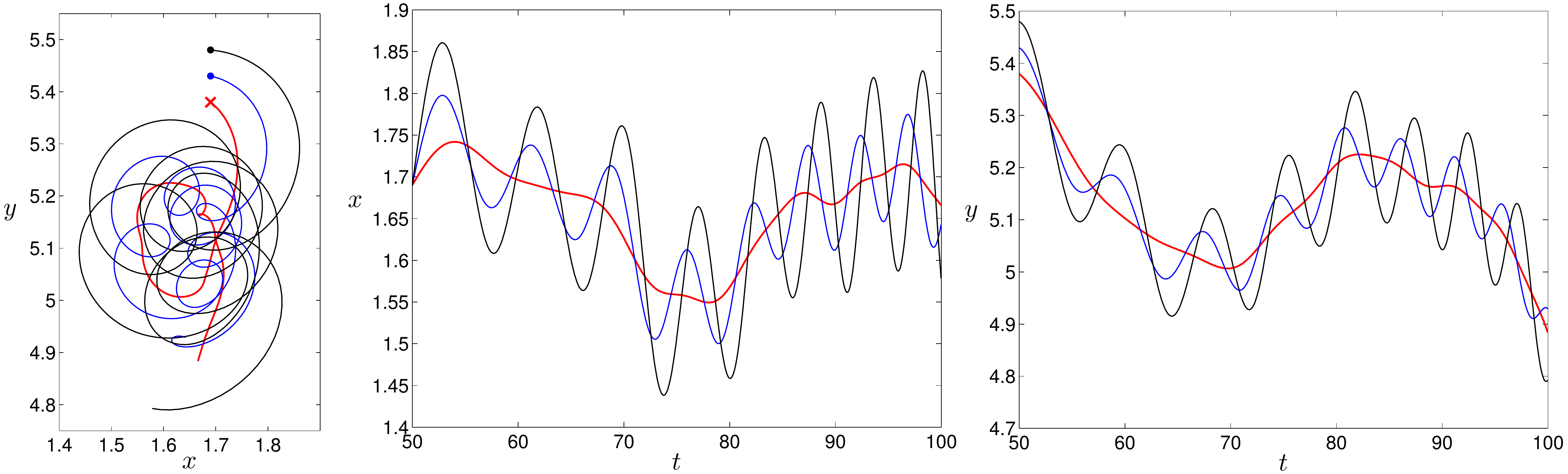}
	\protect\caption{Left panel: Trajectories of the Lagrangian vortex center (red) and
		nearby passive tracers (blue and black). Middle and Right panels:
		The coordinates of the vortex center and nearby tracers as a function
		of time.}
	\label{fig:2Dturb_vCore} 
\end{figure}

The cross in Fig.~\ref{fig:2Dturb_contour} marks a local extremum
of the PRA, which is a Lagrangian vortex center by our Definition
\ref{def:vc}. This local extremum indeed turns out to behave as the vortex center
over the time interval of interest, i.e., $t\in[50,100]$. 
Figure~\ref{fig:2Dturb_vCore} 
shows the trajectory starting from this PRA
extremum, whose initial coordinates are $(1.690,5.380)$. For reference,
two other trajectories are also shown with initial positions at
$0.05$ and $0.1$ distance from the vortex center. Due to the complexity
of the flow, the trajectory patterns are not illuminating. However,
their $x$- and $y$-coordinates as a function of time reveal the
oscillatory motion of nearby trajectories around the vortex center,
while the vortex center itself has minimal oscillations (cf. middle
and right panels of Fig.~\ref{fig:2Dturb_vCore}). The oscillations
of the vortex center are due to the motion of the vortex as a whole.
The nearby trajectories, however, exhibit higher frequency oscillations
which are due to their swirling motion around the vortex center.

\subsection{Stratified geophysical fluid flow}\label{sec:BVE} 
We consider a simplified model for stratified geophysical
fluid flow, the barotropic equation. This equation, in the vorticity-stream
form, reads \citep{MajdaBook} 
\begin{equation}
\partial_{t}\omega+J(\psi,\omega)+\partial_{x}\psi=0,\ \ \ 
\omega=\Delta\psi,\label{eq:bve}
\end{equation}
where $w(x,y,t)$ and $\psi(x,y,t)$ are non-dimensional vorticity
and stream function, respectively. In deriving this equation, the
viscous dissipation is neglected and the Coriolis frequency is assumed
to be linear in the meridional coordinate $y$ (i.e., the $\beta$-plane
approximation is used \citep{MajdaBook}). The Jacobian operator reads
$J(\psi,\omega)=\partial_{x}\psi\partial_{y}\omega-\partial_{y}\psi\partial_{x}\omega$.
The fluid velocity field $\vc u$ is given in terms of the stream
function by $\vc u=(\partial_{y}\psi,-\partial_{x}\psi)$.

\begin{figure}[t!]
	\centering 
	\includegraphics[width=\textwidth]{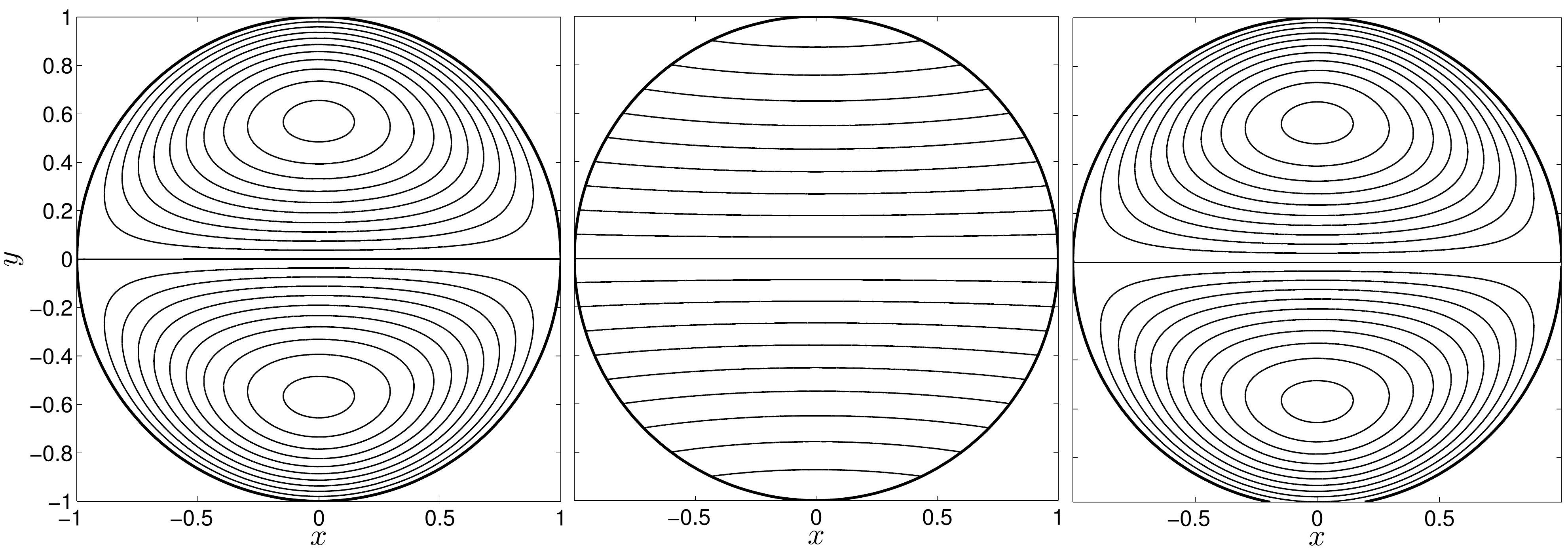}
	\caption{Contours of the stream function $\psi$ (left), vorticity $\omega$
		(middle) and potential vorticity $q$ (right) of the modon solution
		\eqref{eq:modon}.}
	\label{fig:modons} 
\end{figure}

\begin{figure}[h!]
	\centering 
	\includegraphics[width=\textwidth]{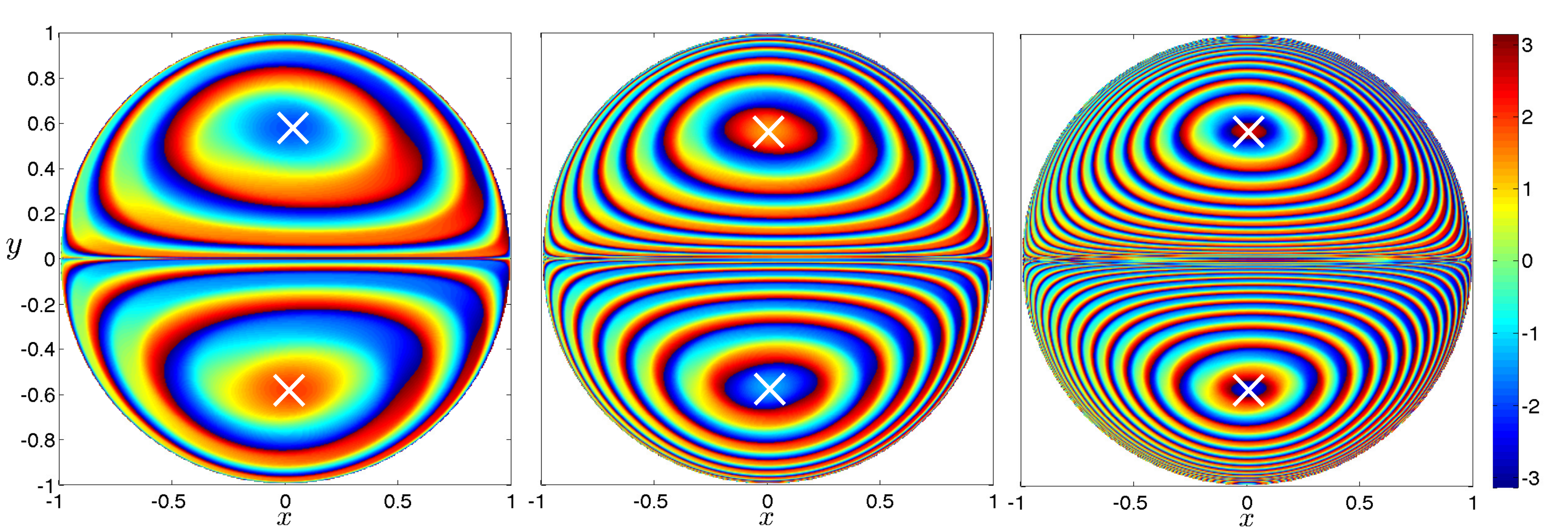}
	\protect\caption{The PRA for the modon solution \eqref{eq:modon}. The initial time
		is $t_{0}=0$ and the final times are $t=100$ (left), $t=250$ (middle)
		and $t=500$ (right). All figures are computed from a grid of roughly
		$35$ thousand uniformly distributed initial conditions in the unit
		disk. The points marked by crosses are the Lagrangian vortex centers obtained 
		from 
		Definition~\ref{def:vc} over the corresponding time intervals.}
	\label{fig:modons_LagRot} 
\end{figure}

Vorticity is not preserved along fluid trajectories when the flow
satisfies \eqref{eq:bve}. Instead, one can show that the \emph{potential
	vorticity} $q=\omega+y$ is conserved along these trajectories (see, e.g., 
\citep{MajdaBook}).

We consider a steady exact solution of the barotropic equation \eqref{eq:bve}
called a \emph{modon}: a uniformly propagating vortex dipole. For
this modon solution, the stream function and vorticity are given respectively
by 
\begin{align}
& \psi(r,\varphi)=\left(\frac{J_{1}(r)}{J_{1}(1)}-r\right)\sin\varphi, & 0\leq 
r\leq1\nonumber \\
& \omega(r,\varphi)=-\frac{J_{1}(r)}{J_{1}(1)}\sin\varphi, & 0\leq 
r\leq1\label{eq:modon}
\end{align}
in polar coordinates $(r,\varphi)$ where $r=\sqrt{x^{2}+y^{2}}$,
$\tan\varphi=y/x$ and $J_{1}$ is the Bessel function of the first
kind \citep{Flierl1980}. This solution is written in a frame co-moving
with the modon at a constant speed $c=1$.

The stream function $\psi$ defines a flow on the invariant domain
$r\leq1$. While this solution can, in principle, be extended to the
entire plane \citep{Flierl1980,BarEqExSol_03}, here we only consider
the motion inside the unit disk.

Figure~\ref{fig:modons} shows the stream function, the vorticity and
the potential vorticity for the modon solution \eqref{eq:modon}.
Since the flow is integrable, its stream function completely describes
the flow structure, showing two counter-rotating vortices.

The vorticity $\omega$ is negative in the upper half-disk $y>0$
and positive in the lower half-disk $y<0$. Its contours, however,
do not reveal the two vortices present in the flow. This is because
unlike the two-dimensional Euler flows, vorticity is not conserved
along the trajectories of the solutions of the barotropic equation
\eqref{eq:bve}.

The potential vorticity $q$, as a conserved quantity, reveals the
eddies. Its level curves (Fig.~\ref{fig:modons}, right panel) resemble
those of the streamlines. In fact, the particular solution \eqref{eq:modon} of the 
barotropic equation satisfies $q=-\psi$.

Figure~\ref{fig:modons_LagRot} shows the PRA for integration times
$t=100$, $250$ and $500$. The integration time $t=100$ is chosen
such that almost all periodic orbits of $\dot{\vc x}=\vc u(\vc x)$
complete at least one period. Even with this relatively short integration
time, PRA contours already reveal the vortices. Obtained from a finite-time
assessment of the flow, the PRA contours deviate from the trajectories.
As the integration time increases, however, the PRA contours converge
to the streamlines and Lagrangian vortex centers obtained from the
PRA converge to the elliptic fixed points of the flow.

\subsection{ABC flow}
\label{sec:abc} 
As our last example, we consider the Arnold-Beltrami--Childress (ABC)
flow $\dot{\vc x}=\vc u(\vc x)$ where 
\begin{equation}
\vc u(\vc x)=\begin{pmatrix}A\sin(z)+C\cos(y)\\
B\sin(x)+A\cos(z)\\
C\sin(y)+B\cos(x)
\end{pmatrix},\label{eq:abc}
\end{equation}
with $\vc x=(x,y,z)$ and $A,B,C\in\mathbb{R}$ are constant parameters
\citep{topolHydro_arnold}. The velocity field $\vc u$ is an exact
steady solution of Euler's equation for inviscid Newtonian fluids
with periodic boundary conditions. The ABC velocity field is a Beltrami
vector field satisfying $\pmb\omega(\vc x)=\vc u(\vc x)$ with 
$\pmb\omega=\bnabla\times\vc u$
being the vorticity field. 

In the following, we set $A=1$, $B=\sqrt{2/3}$ and 
$C=\sqrt{1/3}$. The Lagrangian computations are carried out on a uniform
grid of $200\times 200\times 200$ initial conditions distributed over the domain $\mathbb 
T^3\in[0,2\pi]\times[0,2\pi]\times[0,2\pi]$.

Figure~\ref{fig:abc_vort_LagRot} (left panel) shows the helicity density
$\langle\vc u,\pmb\omega\rangle$ (=$|\pmb\omega|^{2}$). While such Eulerian features
may suggest coherent vortical motion throughout the domain, the ABC
flow is known to have chaotic fluid trajectories in addition to coherent
swirling trajectories lying on invariant tori \citep{abc_chaos}.
\begin{figure}[t!]
	\centering \includegraphics[width=0.48\textwidth]{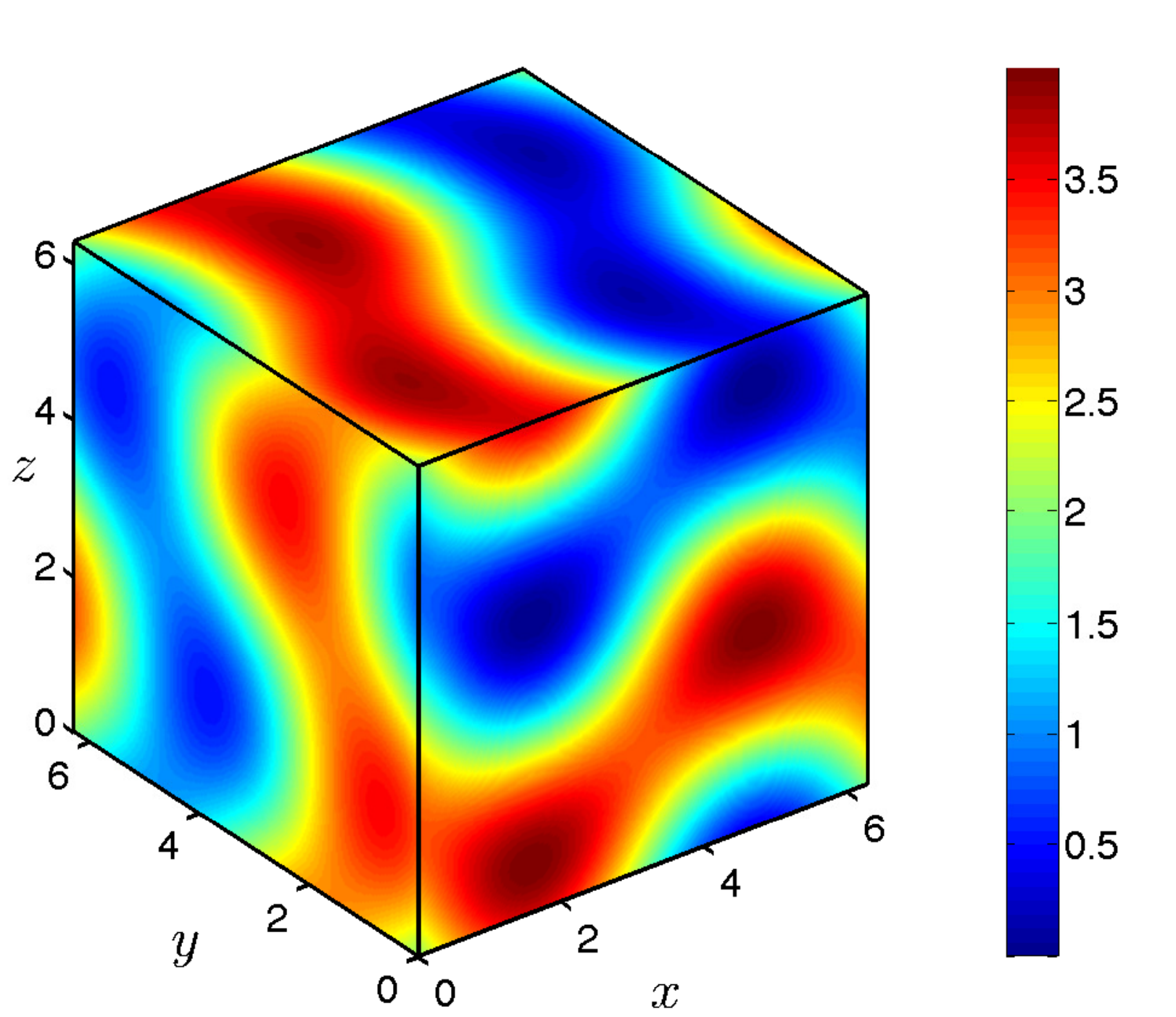}
	\includegraphics[width=0.48\textwidth]{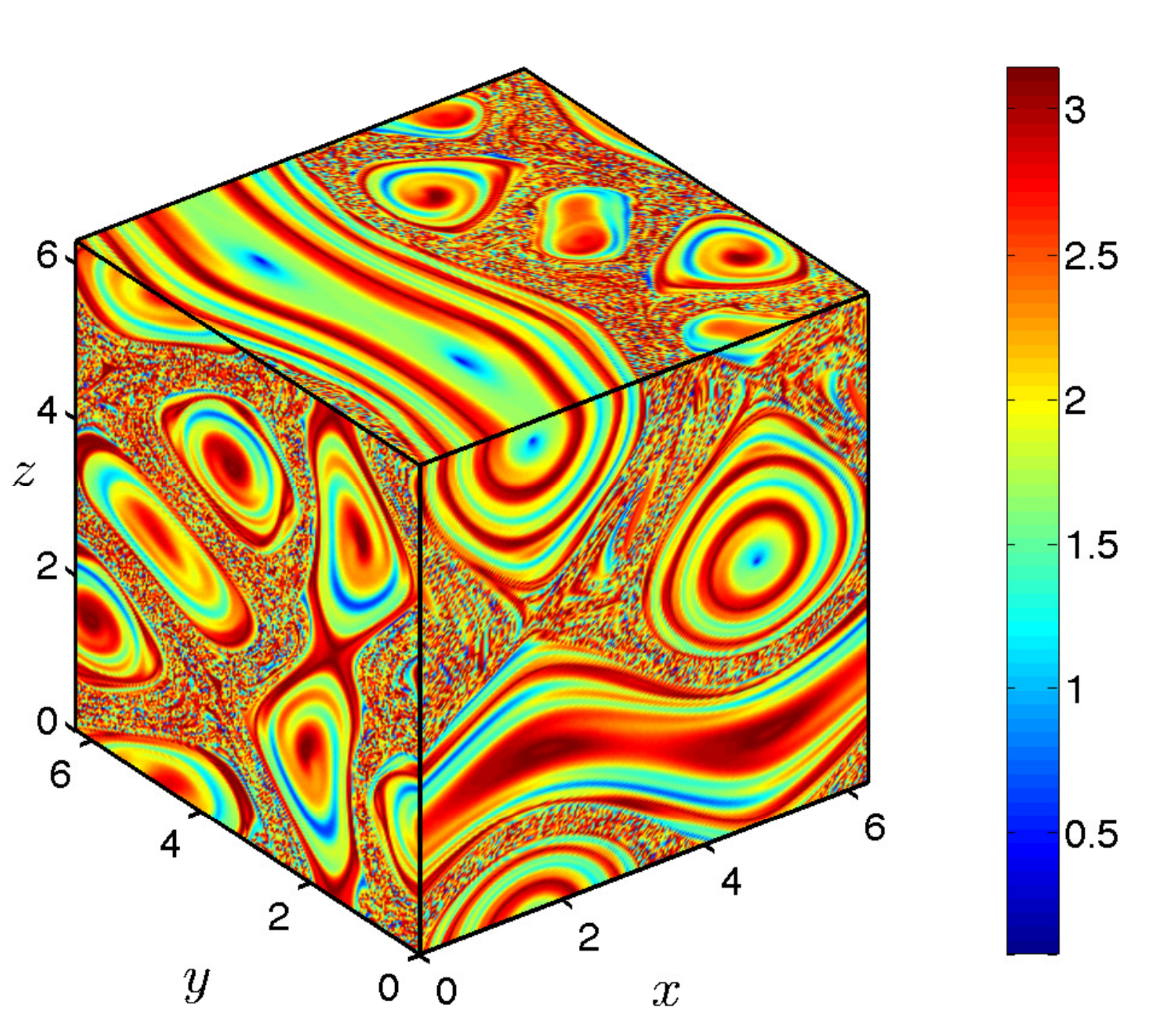}
	\protect\caption{Left: The helicity $\langle\vc u,\pmb\omega\rangle=|\pmb\omega|^{2}$
		for the ABC flow. Right: The two-quadrant PRA $\theta_{t_{0}}^{t}$ with the 
		integration
		time $t-t_0=50$, computed from formula \eqref{eq:th_2q}.}
	\label{fig:abc_vort_LagRot} 
\end{figure}
\begin{figure}[t!]
	\centering 
	\includegraphics[width=0.45\textwidth]{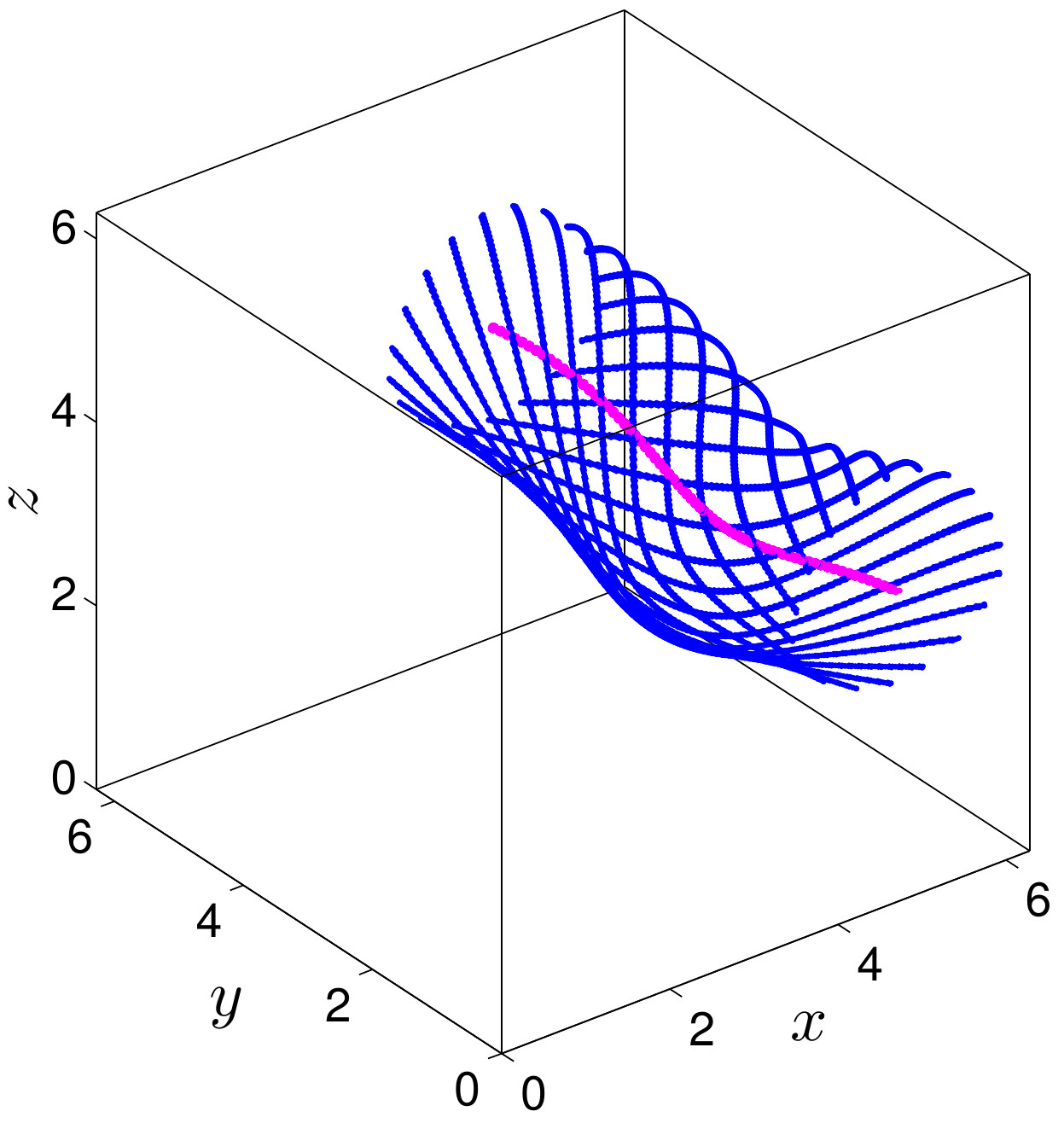}
	\includegraphics[width=0.45\textwidth]{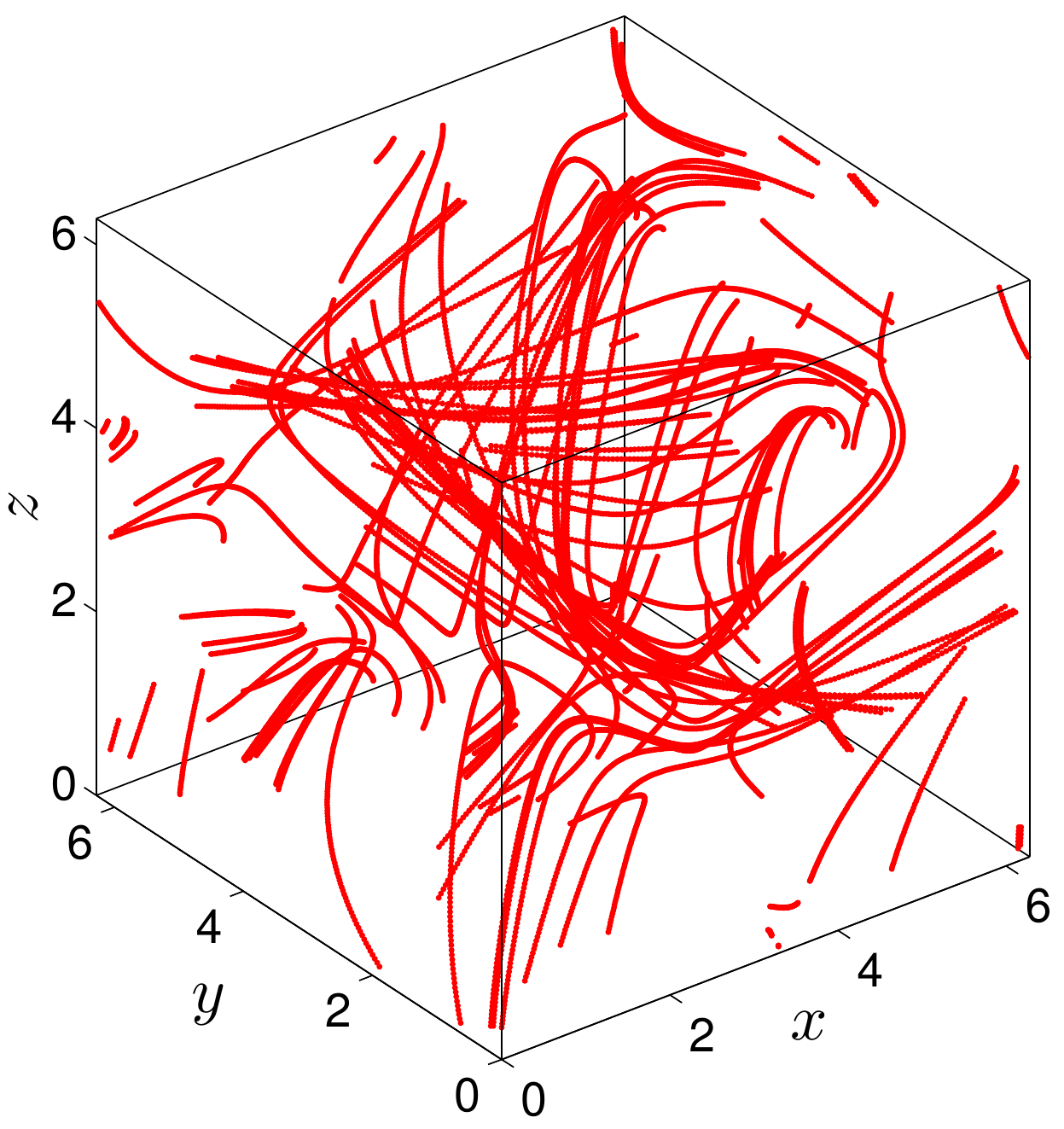}
	\includegraphics[width=0.35\textwidth]{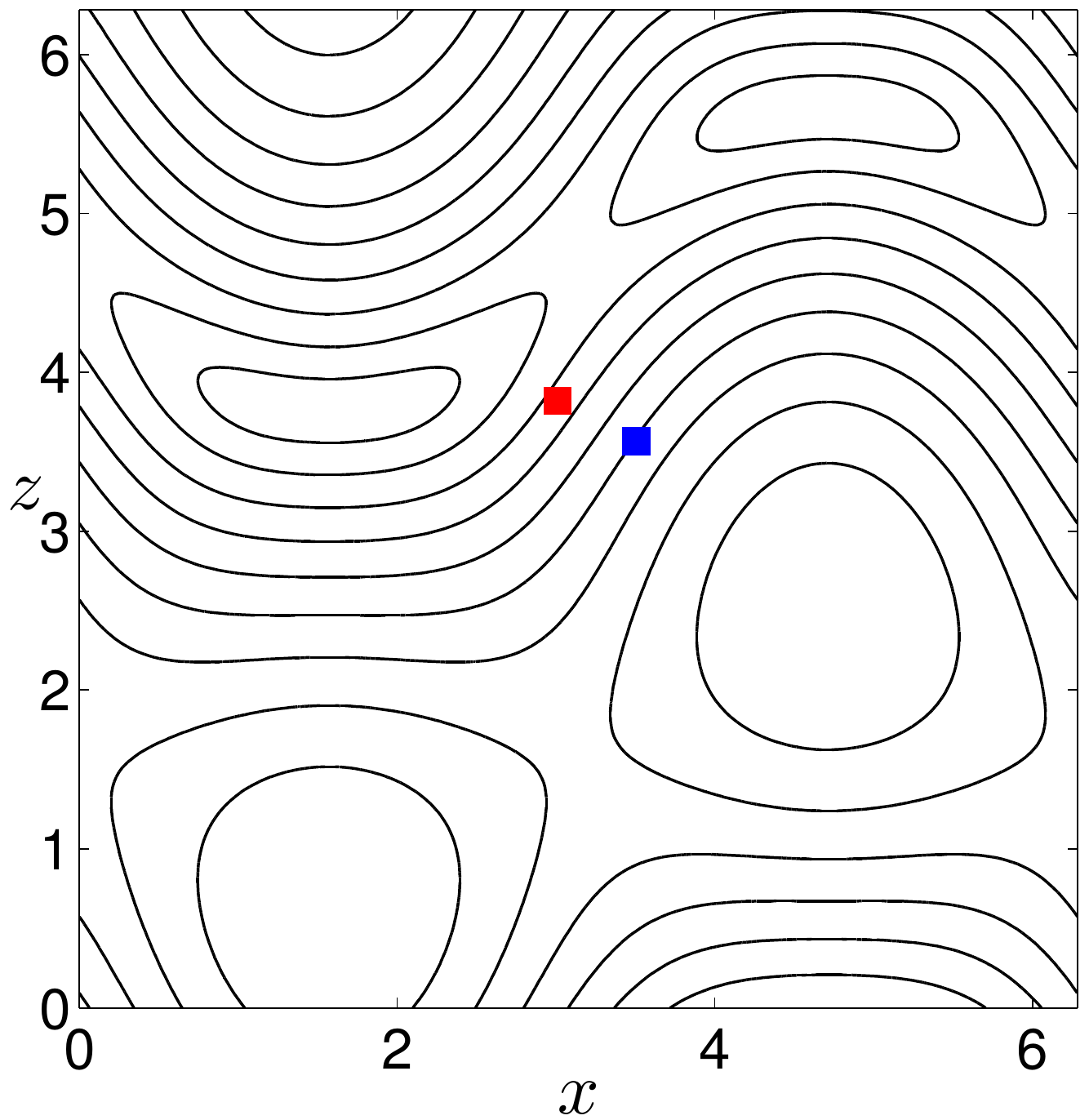}
	\includegraphics[width=0.35\textwidth]{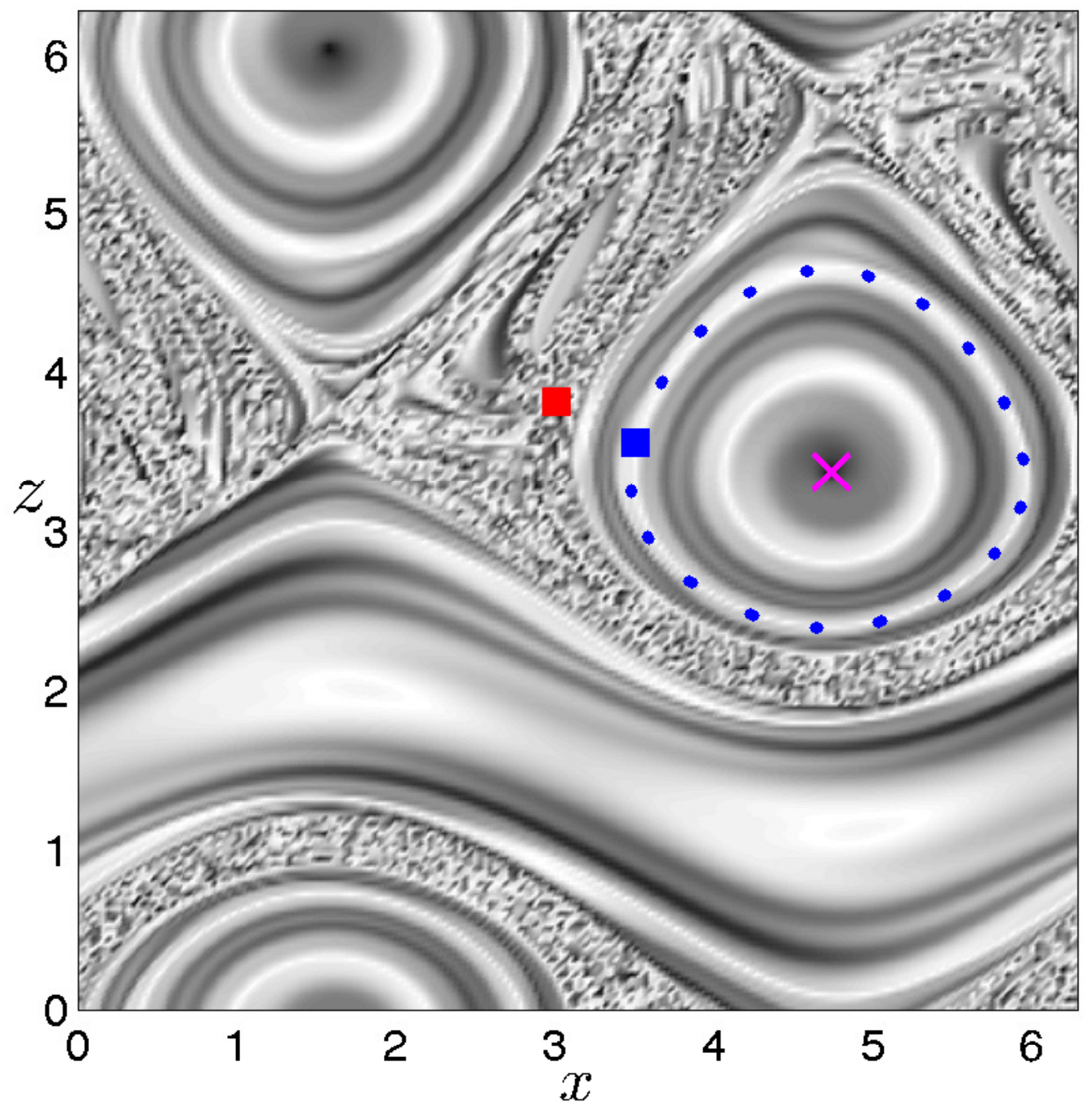}
	\protect\caption{Top: Two trajectories of the ABC flow. The blue trajectory 
		starts
		in an elliptic island and traces the surface of an invariant torus.
		The red trajectory is chaotic. Note that the trajectories are plotted
		modulo $2\pi$. The material line marking the Lagrangian vortex core (cf. 
		Definition 
		\ref{def:vc}) is 
		plotted in magenta color. 
		Bottom: The initial condition (squares) of each
		trajectory is superimposed on the $y=0$ slice of the helicity $\langle\vc 
		u,\pmb\omega\rangle=|\pmb\omega|^{2}$
		(left) and the PRA $\theta_{t_{0}}^{t}$ obtained from formula
		\eqref{eq:th_2q} (right). The right panel also shows the intersections
		of the blue trajectory with the plane $y=0$ (blue dots), as well as the vortex 
		core (magenta cross).}
	\label{fig:abc_trajs} 
\end{figure}

These invariant tori form vortical regions that we seek to capture
from finite-time flow samples as elliptic regions. Using a local variational
principle extremizing Lagrangian shear, elliptic LCSs approximating
the tori from finite-time flow samples have been constructed by \citet{blazevski_3d}.
Here we illustrate that polar LCSs obtained from the PRA also give
a close approximation at a reduced computational cost.

Indeed, the PRA admits tubular level surfaces that closely approximate
the invariant tori (Fig.~\ref{fig:abc_vort_LagRot}, right panel).
Codimension-two level sets of the PRA are periodic material curves
at the cores of the elliptic regions. These material lines serve as
Lagrangian vortex centers by Definition~\ref{def:vc}. As in earlier examples, outside
the elliptic islands formed by these closed level surfaces, PRA levels
exhibit small-scale variations due to sensitive dependence of the
rotation angle on the initial conditions.

To examine how accurately the PRA field $\theta_{t_{0}}^{t}$ captures
the tori and the chaotic region boundaries, we release two trajectories
from the initial conditions $\vc x_{0}=(3.085,0,3.820)$ (red square
in the bottom panel of Fig.~\ref{fig:abc_trajs}) and $\vc x_{0}=(3.505,0,3.568)$
(blue square in the bottom panel of Fig.~\ref{fig:abc_trajs}). The
initial conditions are chosen such that they are nearby, yet one belongs
to the chaotic region (red square) and the other (blue square) belongs
to a smooth level-surface of the PRA signaling an invariant torus.

These initial conditions are then advected under the ABC flow from
time $t=0$ to $t=500$. The resulting trajectories are shown
in the top panel of Fig.~\ref{fig:abc_trajs}. As expected, the blue
trajectory remains on a torus while the red trajectory exhibits chaotic
behavior. Note that all curves correspond
to a single trajectory and only appear as line segments because
they are plotted modulo $2\pi$. The intersections of the coherent trajectory with the 
Poincar\'e section $y=0$ shows that the PRA captures the invariant torus accurately
(Fig.~\ref{fig:abc_trajs}, bottom panel).

We stress that both initial conditions studied here belong to topologically
equivalent regions of the local helicity $\langle\vc u,
\pmb\omega\rangle=|\pmb\omega|^{2}$. The
vorticity magnitude, therefore, fails to distinguish vortical regions
from chaotic regions. This is because vorticity magnitude
is not a material invariant of the Euler's equation in three dimensions
and therefore does not generally capture material behavior.

\section{Conclusions}

Most approaches to coherent structures seek their
signature in material separation or stretching. By contrast, we have
developed here an approach to locate coherent structures based
on their signature in material rotation. To quantify finite material
rotation in a mathematically precise fashion, we have used the polar
rotation tensor from the unique rotation-stretch factorization of
the deformation gradient.

For two- and three-dimensional dynamical systems, we have derived
explicit formulas for the polar rotation angle (PRA) generated by
the rotation tensor around its axis of rotation. While polar rotation
has broadly been studied and used in continuum mechanics, the simple
formulas we have derived here for the PRA in terms of the flow gradient,
its singular values and singular vectors have not been available.
These formulas enable the efficient computation of the PRA from basic
quantities provided by existing numerical algorithms for Lagrangian
coherent structure detection.

Building on the PRA, we have also introduced the notion of polar Lagrangian
coherent structures (polar LCSs). These are tubular material surfaces
along which trajectories admit the same PRA value over a finite time
interval of interest. We have proposed regions filled by polar LCSs
as rotation-based generalizations of the classic elliptic islands
filled by KAM tori in Hamiltonian systems.

As we demonstrated on a direct numerical simulation of two-dimensional
turbulence, the PRA identifies Lagrangian vortex boundaries with high accuracy.
While geodesic LCS theory of \citet{bhEddy} offers an exact detection of 
such vortex boundaries as solutions of differential equations, the present
diagnostic detection of these boundaries as outermost closed PRA level
curves is substantially less computational, and hence preferable for an approximate
identification of these boundaries.

Outside the Lagrangian vortex boundaries, the PRA is dominated by
small-scale noise due to its sensitive dependence on initial conditions.
In these regions, therefore, the PRA displays no clear signature for
hyperbolic LCSs governing chaotic tracer mixing. These latter types
of LCSs, by contrast, are efficiently revealed by another objective
diagnostic, the finite-time Lyapunov exponent (FTLE) \citep{LCS_review}.
The PRA and FTLE have a well-defined duality: the former is a scalar
field characterizing the rotational factor $\mathbf{R}_{t_{0}}^{t}$,
while the latter characterizes the stretch factor $\mathbf{U}_{t_{0}}^{t}$
in the polar decomposition $\bnabla\Ff=\R\mathbf U_{t_0}^t$ of the deformation gradient.

We have found that local extrema of the PRA mark initial positions
of trajectories that serve as well-defined centers for elliptic islands. 
Oscillations in these center trajectories are minimal and
arise solely due to the material translation of the underlying island.
Nearby trajectories inside the elliptic island, on the other hand, oscillate rapidly
due to their swirling motion around the center trajectory (cf. 
Fig.~\ref{fig:2Dturb_vCore}). The ability of the PRA to identify a unique
vortex center should be helpful in Lagrangian versions of the Eulerian
eddy censuses carried out by \citet{dong2014global} and \citet{chelton2007global}.

The elliptic island boundaries marked by PRA do not necessarily remain unfilamented
under advection. If the goal is to find perfectly coherent Lagrangian
vortices (see, e.g.,~\citep{farazmand2014cohVort}), then the geodesic
theory of \citet{bhEddy} should be applied. This theory identifies
material vortex boundaries as closed null geodesics of the generalized
Green-Lagrange strain tensor. The related computations require the
a priori identification of phase space regions where such closed geodesics
may exist \citep{karrasch2014automated}. Vortex regions identified
from the PRA provide a quickly computable starting point for the detection
of closed Green-Lagrange null geodesics. Incorporating the vortex
centers obtained from the PRA in the geodesic LCS analysis
is, therefore, expected to lead to a notable computational speed-up.

Finally, the polar LCSs obtained as level curves of the PRA are frame-invariant
for planar flows (see Proposition \ref{prop:obj}). Such objectivity is desirable
for coherent structure identification methods in order to exclude
false positives and negatives specific to the coordinate system used
in the analysis \citep{LCS_review}. In three dimensions, however,
the PRA does depend on the reference frame. The objective detection
of higher-dimensional elliptic islands from their rotational coherence,
therefore, requires further work.

\begin{appendices}
	\section{Proof of Proposition \ref{prop:pra}}\label{app:proof_prop1}
	\textbf{Part (1):} The trace of a tensor is independent of the choice of basis. If we 
	represent the rotation 
	tensor $\R$ in the orthonormal basis
	$\{\bxi_{k}\}_{1\leq k\leq 3}$, then its entires satisfy 
	$\left[\R\right]_{ij}=\left\langle \bxi_{i},\R\bxi_{j}\right\rangle $.
	Therefore, using formula \eqref{eq:rescaling}, we can write
	\begin{align}
	\tr\R & =\sum_{i=1}^{3}\left\langle \bxi_{i},\R\bxi_{i}\right\rangle 
	=\sum_{i=1}^{3}\left\langle 
	\bxi_{i},\nabla\Ff\left[\mathbf{U}_{t_{0}}^{t}\right]^{-1}\bxi_{i}\right\rangle
	\nonumber\\
	& =\sum_{i=1}^{3}\left\langle 
	\bxi_{i},\nabla\Ff\frac{1}{\sqrt{\lambda_{i}}}\bxi_{i}\right\rangle 
	=\sum_{i=1}^{3}\frac{\left\langle \bxi_{i},\nabla\Ff\bxi_{i}\right\rangle 
	}{\sqrt{\lambda_{i}}},
	\label{eq:traceform}
	\end{align}
	which, together with \eqref{eq:LagRot_3d_cos}, proves formula \eqref{eq:cosPRA_3d}.
	
	To prove formula \eqref{eq:sinPRA_3d}, we first note the
	coordinate form of equation \eqref{eq:LagRot_3d_sin}:
	\[
	\frac{1}{2}\left(\left[\mathbf{R}_{t_{0}}^{t}\right]_{ij}-\left[\mathbf{R}_{t_{0}}^{t}\right]_{ji}\right)=\sin\theta_{t_{0}}^{t}\epsilon_{ijk}\left[\mathbf{r}_{t_{0}}^{t}\right]_{k}.
	\]
	Applying the same argument used in \eqref{eq:traceform} in the strain
	eigenbasis, we obtain
	\begin{equation}
	\sin\theta_{t_{0}}^{t}=\frac{\left\langle 
		\bxi_{i},\nabla\Ff\frac{1}{\sqrt{\lambda_{j}}}\bxi_{j}\right\rangle -\left\langle 
		\bxi_{j},\nabla\Ff\frac{1}{\sqrt{\lambda_{i}}}\bxi_{i}\right\rangle 
	}{2\epsilon_{ijk}\left[\mathbf{r}_{t_{0}}^{t}\right]_{k}},\quad i\neq 
	j.\label{eq:sin_proof}
	\end{equation}
	Next we write the eigenvector $\mathbf{r}_{t_{0}}^{t}$ in strain
	basis as $\mathbf{r}_{t_{0}}^{t}=\sum_{k}e_{k}\bxi_{k}$ to obtain
	\begin{equation*}
	\sum_{k}e_{k}\bxi_{k}=\mathbf{R}_{t_{0}}^{t}\sum_{k}e_{k}\bxi_{k}=
	\nabla\Ff\left[\mathbf{U}_{t_{0}}^{t}\right]^{-1}\sum_{k}e_{k}\bxi_{k}=
	\sum_{k}\frac{e_{k}}{\sqrt{\lambda_{k}}}\nabla\Ff\bxi_{k},
	\end{equation*}
	which implies 
	\[
	e_{j}=\sum_{k}\frac{\left\langle \bxi_{j},\nabla\Ff\bxi_{k}\right\rangle 
	}{\sqrt{\lambda_{k}}}e_{k},
	\]
	or, equivalently, $\mathbf{K}_{t_{0}}^{t}\mathbf{e}=\mathbf{e}$,
	with $\mathbf{K}_{t_{0}}^{t}$ and $\mathbf{e}$ defined in the statement
	of Proposition 1. Since $\left[\mathbf{r}_{t_{0}}^{t}\right]_{k}=e_{k}$,
	formula \eqref{eq:sin_proof} proves \eqref{eq:sinPRA_3d}.
	
	\textbf{Part (2)}: Two-dimensional flows are parallel to a distinguished plane and 
	exhibit
	no stretching or shrinking along the normal of this plane. In this
	case, we have
	\[
	\lambda_{1}\leq\lambda_{2}=1\leq\lambda_{3},
	\]
	with the strain eigenvector $\bxi_{2}$ pointing in the normal of
	the plane in question. Formula \eqref{eq:PRA3D} then gives 
	\begin{equation}
	\cos\theta_{t_{0}}^{t}=\frac{1}{2}\left(\frac{\left\langle 
	\bxi_{1},\nabla\Ff\bxi_{1}\right\rangle }{\sqrt{\lambda_{1}}}+\frac{\left\langle 
	\bxi_{3},\nabla\Ff\bxi_{3}\right\rangle 
	}{\sqrt{\lambda_{3}}}\right).\label{eq:2Dproof}
	\end{equation}
	Restricting our consideration to the two-dimensional plane of the
	flow, we reindex the quantities in formula \eqref{eq:2Dproof} as
	$\lambda_{3}\to\lambda_{2}$ and $\bxi_{3}\to\bxi_{2}$, given that
	the original $\lambda_{3}$ strain eigenvalue of the flow is the second
	largest principal strain in the plane of the flow. After this re-indexing,
	equation \eqref{eq:2Dproof} gives 
	\begin{equation}
	\cos\theta_{t_{0}}^{t}=\frac{1}{2}\sum_{i=1}^{2}\frac{\left\langle 
	\bxi_{i},\nabla\Ff\bxi_{i}\right\rangle }{\sqrt{\lambda_{i}}}.\label{eq:2Dprooflast}
	\end{equation}
	The summands in this last expression are just the diagonal elements
	of the two-dimensional rotation tensor $\R$ represented in the $\left\{ 
	\bxi_{1},\bxi_{2}\right\} $
	basis (cf. our discussion leading to equation \eqref{eq:traceform}). Since
	the diagonal elements of any two-dimensional rotation matrix are equal,
	formula~\eqref{eq:cosPRA} follows from \eqref{eq:2Dprooflast}. 
	
	In two dimensions, the rotation tensor is of the form 
	\begin{equation}
	\R(\vc x_{0})=\begin{pmatrix}\cos\theta_{t_{0}}^{t}(\vc x_{0}) & 
	\sin\theta_{t_{0}}^{t}(\vc x_{0})\\
	-\sin\theta_{t_{0}}^{t}(\vc x_{0}) & \cos\theta_{t_{0}}^{t}(\vc x_{0})
	\end{pmatrix}.
	\end{equation}
	Thus, using the argument in \eqref{eq:traceform}, we obtain that
	\[
	\sin\theta_{t_{0}}^{t}=\left\langle \bxi_{1},\R\bxi_{2}\right\rangle 
	=\frac{\left\langle 
		\bxi_{1},\nabla\Ff\bxi_{2}\right\rangle }{\sqrt{\lambda_{2}}}=-\left\langle 
	\bxi_{2},\R\bxi_{1}\right\rangle =-\frac{\left\langle 
		\bxi_{2},\nabla\Ff\bxi_{1}\right\rangle }{\sqrt{\lambda_{1}}},
	\]
	whichi is the PRA formula \eqref{eq:sinPRA}.

	\section{Total Lagrangian rotation in planar flows}
	\label{app:total_rot} 
	The polar rotation $\theta_{t_{0}}^{t}$ defined
	in Definition~\ref{def:PRA} is the net rotation of the $\{\bxi_{1},\bxi_{2}\}$
	eigenbasis over the time interval $[t_{0},t].$ This quantity, however,
	measures the rotation modulo $2\pi$ and does not differentiate between
	rotation by $\theta_{0}$ and $\theta_{0}+2k\pi$. Here, we also derive
	an expression for the total Lagrangian rotation of the eigenbasis
	that distinguishes between rotations differing by an integer multiple
	of $2\pi$. 
	
	Consider the equations of variations for a given infinitesimal displacement
	$\pmb\xi$, 
	\begin{equation}
	\dot{\pmb\xi}(t)=\bnabla\vc u(\vc x(t),t)\pmb\xi(t).\label{eq:eqvari}
	\end{equation}
	Write $\pmb\xi(t)=r\vc e_{\phi}$ where $\vc e_{\phi}=(\cos\phi,\sin\phi)^{\top}$
	and $(r,\phi)$ are functions of time. Substituting this in the equations
	of variations \eqref{eq:eqvari} we get 
	\begin{equation}
	\dot{r}\vc e_{\phi}+r\dot{\phi}\vc e_{\phi}^{\perp}=r\bnabla\vc u(\vc x(t),t)\vc 
	e_{\phi},
	\end{equation}
	with $\vc e_{\phi}^{\perp}=(-\sin\phi,\cos\phi)^{\top}$. Since $\vc e_{\phi}$
	and $\vc e_{\phi}^{\perp}$ are perpendicular, we have \begin{subequations}
		\begin{equation}
		\frac{\dot{r}}{r}=\langle\vc e_{\phi},\bnabla\vc u(\vc x(t),t)\vc e_{\phi}\rangle,
		\end{equation}
		\begin{equation}
		\dot{\phi}=\langle\vc e_{\phi}^{\perp},\bnabla\vc u(\vc x(t),t)\vc 
		e_{\phi}\rangle.\label{eq:thetaDot}
		\end{equation}
	\end{subequations} Therefore, solving Eq. \eqref{eq:thetaDot}, the
	total rotation of an arbitrary displacement vector 
	$\pmb\xi_{0}=(\cos\phi_{0},\sin\phi_{0})^{\top}$
	is given by 
	\begin{equation}
	\theta_{\mathrm{tot}}:=\phi(t)-\phi_{0}=\int_{t_{0}}^{t}\langle\vc 
	e_{\phi(\tau)}^{\perp},\bnabla\vc u(\vc x(\tau),\tau)\vc 
	e_{\phi(\tau)}\rangle\id\tau.\label{eq:totRot}
	\end{equation}
	If the initial vector $\pmb\xi_{0}$ is chosen to be $\bxi_{1}$ (or
	$\bxi_{2}$), $\theta_{\mathrm{tot}}$ measures the total rotation
	of the eigenbasis $\{\bxi_{1},\bxi_{2}\}$. We refer to $\theta_{\mathrm{tot}}$
	as the \emph{total Lagrangian rotation}.
	
	In practice, for evaluating the total Lagrangian rotation \eqref{eq:totRot},
	one needs to first compute the deformation gradient $\bnabla\Ff$
	from which the strain directions $\{\bxi_{1},\bxi_{2}\}$ are computed.
	The orientation of $\bxi_{1}$ (or alternatively $\bxi_{2}$) determines
	the appropriate initial condition $\vc 
	e_{\phi_{0}}=(\cos\phi_{0},\sin\phi_{0})^{\top}$
	with which Eq. \eqref{eq:thetaDot} should be solved. Note that Eq.
	\eqref{eq:thetaDot} must be solved simultaneous with the dynamical
	system $\dot{\vc x}=\vc u(\vc x,t)$ since $\bnabla\vc u$ is evaluated
	along trajectories $\vc x(t;t_{0},\vc x_{0})$.
	
	Therefore, evaluating the total Lagrangian rotation is more expensive
	than computing the PRA. The connected components of the level sets
	of $\theta_{\mathrm{tot}}$ and $\theta_{t_{0}}^{t}$ are identical
	by an argument similar to the one used in the proof of 
	Proposition~\ref{prop:connComp}.
	Thus the polar LCSs revealed by these two scalars are also 
	identical.

	\section{Proof of Proposition \ref{prop:obj}}\label{app:proof_prop3}
	Differentiating both sides of the formula~\eqref{eq:Eucl} with respect
	to the initial condition $\mathbf{x}_{0}$ gives
	\begin{equation}
	\bnabla\Ff=\mathbf{Q}(t)\mathbf{\bnabla\tilde{F}}_{t_{0}}^{t}\mathbf{Q}^{\top}(t_{0}),\label{eq:transfgrad}
	\end{equation}
	where $\bnabla\mathbf{\tilde{F}}_{t_{0}}^{t}$ denotes the deformation
	gradient in the $\mathbf{y}=\mathbf{\tilde F}_{t_0}^t(\vc y_0)$ coordinate system. 
	From 
	\eqref{eq:transfgrad},
	we obtain
	\begin{align*}
	\mathbf{\nabla\tilde{F}}_{t_{0}}^{t}
	&= \mathbf{Q}^{\top}(t)\bnabla\Ff\mathbf{Q}(t_{0})\\
	&=\mathbf{Q^{\top}}(t)\mathbf{R}_{t_{0}}^{t}\mathbf{U}_{t_{0}}^{t}\mathbf{Q}(t_{0})\\
	&=\mathbf{Q^{\top}}(t)\mathbf{R}_{t_{0}}^{t}\mathbf{Q}(t_{0})
	\mathbf{Q^{\top}}(t_{0})\mathbf{U}_{t_{0}}^{t}\mathbf{Q}(t_{0})\\
	&=\mathbf{\tilde{R}}_{t_{0}}^{t}\mathbf{\tilde{U}}_{t_{0}}^{t},
	\end{align*}
	where the rotation tensor 
	$\mathbf{\tilde{R}}_{t_{0}}^{t}\mathbf{=Q^{\top}}(t)\mathbf{R}_{t_{0}}^{t}\mathbf{Q}(t_{0})$
	and the positive definite, symmetric tensor 
	$\mathbf{\tilde{U}}_{t_{0}}^{t}=\mathbf{Q^{\top}}(t_{0})\mathbf{U}_{t_{0}}^{t}\mathbf{Q}(t_{0})$
	represent the unique polar decomposition of $\mathbf{\nabla\tilde{F}}_{t_{0}}^{t}$.
	Then 
	\begin{align*}
	\tr\mathbf{\tilde{R}}_{t_{0}}^{t}(\mathbf{y}_{0}) & 
	=\tr\left[\mathbf{Q^{\top}}(t)\mathbf{R}_{t_{0}}^{t}(\mathbf{x}_{0})\mathbf{Q}(t_{0})\right]\\
	\\
	& =\tr\begin{pmatrix}
	\cos\left[\theta_{t_{0}}^{t}(\mathbf{x}_{0})+q(t_{0})-q(t)\right] & 
	-\sin\left[\theta_{t_{0}}^{t}(\mathbf{x}_{0})+q(t_{0})-q(t)\right]\\
	\sin\left[\theta_{t_{0}}^{t}(\mathbf{x}_{0})+q(t_{0})-q(t)\right] & 
	\cos\left[\theta_{t_{0}}^{t}(\mathbf{x}_{0})+q(t_{0})-q(t)\right]
	\end{pmatrix}\\
	& =2\cos\left[\theta_{t_{0}}^{t}(\mathbf{x}_{0})+q(t_{0})-q(t)\right],
	\end{align*}
	where $q(t)$ represents the angle of rotation associated with $\mathbf{Q}(t)$.
	Therefore, if the polar rotation angle generated by transformed rotation
	tensor $\mathbf{\tilde{R}}_{t_{0}}^{t}$ is 
	$\tilde{\theta}_{t_{0}}^{t}(\mathbf{y}_{0})$, 
	then
	\begin{align}
	\cos\left(\tilde{\theta}_{t_{0}}^{t}(\mathbf{y}_{0})\right) & 
	=\frac{1}{2}\tr\mathbf{\tilde{R}}_{t_{0}}^{t}(\mathbf{y}_{0})\nonumber\\
	& =\cos\left(\theta_{t_{0}}^{t}(\mathbf{x}_{0})+q(t_{0})-q(t)\right).
	\label{eq:rpa_rot}
	\end{align}
	
	Consequently, if two points $\mathbf{x}_{0}$ and $\mathbf{\hat{x}}_{0}$
	lie on the same connected level set of $\theta_{t_{0}}^{t}(\mathbf{x}_{0})$,
	then the corresponding points also lie on a connected level set of
	$\tilde{\theta}_{t_{0}}^{t}(\mathbf{y}_{0})$, even though we generally
	have 
	$\theta_{t_{0}}^{t}(\mathbf{x}_{0})\neq\tilde{\theta}_{t_{0}}^{t}(\mathbf{y}_{0})$.
	
	We note that the level sets of PRA in three dimensions are generally not
	objective. An 
	essential part of the above argument, leading to equation~\eqref{eq:rpa_rot}, is that 
	the 
	rotation matrices $\vc Q(t)$, $\R(\vc x_0)$ and $\vc Q(t_0)$ share the same axis of 
	rotation (i.e., the normal to the plane of motion). In three dimensions, such a 
	uniform 
	axis of rotation does not generally exist, and hence a relation similar to 
	\eqref{eq:rpa_rot} does not hold.
	
\end{appendices}


\end{document}